\newcommand{\squishlist}{
	\begin{list}{$\bullet$}
		{ \setlength{\itemsep}{0pt}      \setlength{\parsep}{3pt}
			\setlength{\topsep}{3pt}       \setlength{\partopsep}{0pt}
			\setlength{\leftmargin}{1.5em} \setlength{\labelwidth}{1em}
			\setlength{\labelsep}{0.5em} } }
\newcommand{\squishlisttwo}{
\begin{list}{$\bullet$}
	{ \setlength{\itemsep}{0pt}    \setlength{\parsep}{0pt}
		\setlength{\topsep}{0pt}     \setlength{\partopsep}{0pt}
		\setlength{\leftmargin}{2em} \setlength{\labelwidth}{1.5em}
		\setlength{\labelsep}{0.5em} } }
\newcommand{\squishend}{
\end{list}  }
\def\script#1{\mathcal{#1}}
\def\mS{\script{S}}
\def\mC{\script{C}}
\def\mR{\script{R}}
\def\mI{\script{I}}
\def\mD{\script{D}}
\newtheorem{theorem}{Theorem}[section]
\newtheorem{proposition}[theorem]{Proposition}
\newtheorem{example}[theorem]{Example}
\newtheorem{definition}[theorem]{Definition}
\begin{document}
		
	\fancyhead{}
		
	\setlength{\abovedisplayskip}{3pt}%
	\setlength{\belowdisplayskip}{3pt}%
	\setlength{\abovedisplayshortskip}{0pt}%
	\setlength{\belowdisplayshortskip}{0pt}%
		
	\title{Learning Over Dirty Data Without Cleaning}

	\author{Jose Picado}
	\affiliation{%
		\institution{Oregon State University}
	}
	\email{jpicado@gmail.com}
	
	\author{John Davis}
	\affiliation{%
		\institution{Oregon State University}
	}
	\email{davisjo5@oregonstate.edu}
	
	\author{Arash Termehchy}
	\affiliation{%
		\institution{Oregon State University}
	}
	\email{termehca@oregonstate.edu}
	
	\author{Ga Young Lee}
	\affiliation{%
		\institution{Oregon State University}
	}
	\email{leegay@oregonstate.edu}

	\begin{abstract}
			Real-world datasets are dirty and contain many errors. 
			Examples of these issues are violations of integrity constraints, duplicates, and inconsistencies in representing data values and entities. Learning over dirty databases may result in inaccurate models.
			Users have to spend a great deal of time and effort to repair data errors and create a clean database 
			for learning. Moreover, as the information required to repair these errors is not often available,
			there may be numerous possible clean versions for a dirty database.
			We propose {\it DLearn}, a novel relational learning system that learns directly over dirty databases effectively and efficiently without any preprocessing. 
			DLearn leverages database constraints to learn accurate relational models over inconsistent and heterogeneous data. 
			Its learned models represent patterns over all possible clean instances of the data in a usable form. 
			Our empirical study indicates that DLearn learns accurate models over large real-world databases efficiently.


	\end{abstract}
	
	\maketitle	
		
		\section{Introduction}
\label{section:heterolearn-introduction}
Users often would like to learn interesting relationships over relational databases \cite{StarAIW,SRLNIPS,10.1145/3183713.3199515,DeRaedt:2010:LRL:1952055,Getoor:SRLBook,QuickFOIL}. Consider the IMDb database ({\it imdb.com}) that contains information about movies whose schema fragments are shown in Table~\ref{table:schemas} (top). Given this database and some training examples, 
a user may want to learn a new relation {\it highGrossing(title)}, which indicates that the movie with a given title is high grossing.
Given a relational database and training examples for a new relation, 
{\it relational machine learning} (relational learning) algorithms learn (approximate) relational models and definitions of the target relation in terms of existing relations in the database~\cite{DeRaedt:2010:LRL:1952055,Getoor:SRLBook,Richardson:2006:MLN,Mihalkova:ICML:07,progolem,Quinlan:FOIL}. 
For instance, the user may provide a set of high grossing movies as positive examples and a set of low grossing movies as negative examples to a relational learning algorithm. Given the IMDb database and these examples, the algorithm may learn:
\begin{align*}
	\mathit{highGrossing}(x) \leftarrow & \mathit{movies}(y, x, z), \mathit{mov2genres}(y,\mathit{`comedy\text{'}}), \\
	 & \mathit{mov2releasedate}(y,\mathit{`May\text{'}}, u),
\end{align*}
which indicates that high grossing movies are often released in May and their genre is {\it comedy}. 
One may assign weights to these definitions to describe their prevalence in the data according their training accuracy \cite{StarAIW,Richardson:2006:MLN}. 
As opposed to other machine learning algorithms, relational learning methods do not require the data points to be statistically  independent and follow the same identical distribution (IID) \cite{10.1145/3183713.3199515}.
Since a relational database usually contain information about multiple types of entities, the relationships between 
these entities often violate the IID assumption.  
Also, the data about each type of entities may follow a distinct 
distribution. 
This also holds if one wants to learn over the data gathered from multiple data sources as each data source may have a distinct data distribution.
Thus, using other learning methods on these databases results
in biased and inaccurate models \cite{StarAIW,SRLNIPS,10.1145/3183713.3199515}.
Since relational learning algorithms leverage the structure of the database directly to learn new relations, 
they do not need the tedious process of feature engineering.
In fact, they are used to discover features for the 
downstream non-relational models \cite{lao-etal-2015-learning}.
Thus, they have been widely used over relational data, e.g., building usable query interfaces \cite{Abouzied:PODS:13,Maier:VLDB:2015,Kalashnikov:2018:FFQ:3183713.3183727}, information extraction \cite{StarAIW,10.1145/3183713.3199515}, 
and entity resolution \cite{Evans2018LearningER}.

\begin{table}
	\centering
	\caption{ Schema fragments for the IMDb and BOM. }
	\vspace{-10pt}
	\begin{tabular} { l l}
		\hline
		\multicolumn{2}{c}{IMDb} \\
		\hline
		movies(id, title, year) &  mov2countries(id, name)\\
		mov2genres(id, name) & mov2releasedate(id, month, year) \\
		\hline
		\hline
		\multicolumn{2}{c}{BOM} \\
		\hline
		\multicolumn{2}{c}{mov2totalGross(title, gross)} \\
		\multicolumn{2}{c}{highBudgetMovies(title)} \\
		\hline
	\end{tabular}
	\label{table:schemas}
\vspace{-15pt}
\end{table}

Real-world databases often contain inconsistencies  \cite{Bertossi2011DataCA,Doan:2012:PDI:2401764,Fan2009ReasoningAR,Getoor:2013:ERB:2487575.2506179,Cong:Gao:2007,10.14778/1952376.1952378,DBLP:series/synthesis/2012Fan}, which may prevent the relational learning algorithms from finding an accurate definition.
In particular, the information in a domain is sometimes spread across several databases. 
For example, IMDb does {\it not} contain the information about the budget or total grossing of movies. This information is available in another database called Box Office Mojo (BOM) ({\it boxofficemojo.com}), for which schema fragments are shown in Table~\ref{table:schemas} (bottom). To learn an accurate definition for {\it highGrossing}, the user has to collect data from the BOM database. 
However, the same entity or value may be represented in various forms in the original databases, e.g., 
the {\it titles} of the same movie in IMDb and BOM have different formats, e.g., the title of the movie {\it Star Wars: Episode IV} is represented in IMDb as {\it Star Wars: Episode IV - 1977} and in BOM as {\it Star Wars - IV}.  
A single database may also contain these type of heterogeneity as a relation may have duplicate tuples for the same entity, e.g.,
duplicate tuples fo the same movie in BOM.
A database may have other types of inconsistencies that violate the integrity of the data. 
For example, a movie in IMDb may have two different production years \cite{Cong:Gao:2007,10.14778/1952376.1952378,DBLP:series/synthesis/2012Fan}.

Users have to resolve inconsistencies and learn over the repaired database, which is very difficult and time-consuming for large databases \cite{Doan:2012:PDI:2401764,Getoor:2013:ERB:2487575.2506179}. 
Repairing inconsistencies usually leads to numerous clean instances as the information about the correct fixes is not often
available \cite{Bertossi2011DataCA,Burdick:2016:DFL:2966276.2894748,Fan2009ReasoningAR}.
An entity may match and be a potential duplicate of multiple distinct entities in the database.
For example, title {\it Star Wars} may match both titles {\it Star Wars: Episode IV - 1977} and {\it Star Wars: Episode III - 2005}. 
Since we know that the {\it Star Wars: Episode IV - 1977} and {\it Star Wars: Episode III - 2005} refer to two different movies, 
the title {\it Star Wars} must be unified with only one of them. For each choice, the user ends up with a distinct database instance.
Since a large database may have many possible matches, the number of clean database instances will be enormous.
Similarly, it is not often clear how to resolve data integrity violations.
For instance, if a movie has multiple production years, one may not know which year is correct.
Due to the sheer number of volumes, it is not possible to generate and materialize all clean instances for a large dirty database \cite{DBLP:series/synthesis/2012Fan}. 
Cleaning systems usually produce a subset of all clean instances, e.g., the ones that differ minimally with the original data \cite{DBLP:series/synthesis/2012Fan}. This approach still generates many repaired databases \cite{Bertossi2011DataCA,10.14778/1952376.1952378,DBLP:series/synthesis/2012Fan}. 
It is also shown that these conditions may not produce the correct instances \cite{DBLP:journals/debu/Ilyas16}. 
Thus, the cleaning process may result in many instances where it is not clear which one to use for learning.
It takes a great deal of time for users to manage these instances and decide which one(s) to use for learning.
Most data scientists spend more than 80\% of their time on such cleaning tasks \cite{Krishnan:2016:AID:2882903.2899409}.

Some systems aim at producing a single probabilistic database that contain information about a subset of possible clean instances \cite{Rekatsinas2017HoloCleanHD}. 
These systems, however, do not address the problem of duplicates and value heterogeneities as 
they assume that there always is a reliable table, akin to a dictionary, which gives the 
unique value that should replace each potential duplicate in the database.
However, given that different values represent the same entity, it is {\it not} clear what should replace
the final value in the clean database, e.g., whether {\it Star War} represents {\it Star Wars: Episode IV - 1977} or {\it Star Wars: Episode III - 2005}. They also allow violations of integrity constraints to generate the final probabilistic database efficiently, which may lead to inconsistent repairs. Moreover, to restrict the set of clean instances, they require attributes to have finite domains 
that does not generally hold in practice.

We propose a novel learning method that learns directly over dirty databases without materializing its clean versions, thus, it substantially reduces the effort needed to 
learn over dirty. 
The properties of clean data are usually expressed using declarative data constraints, e.g., functional dependencies,  \cite{AliceBook,Abedjan:2015:PRD:2811716.2811766,DBLP:conf/sigmod/ChuIKW16,DBLP:conf/pods/Fan08,Fan2009ReasoningAR,DBLP:conf/kr/BahmaniBKL12,DBLP:series/synthesis/2012Fan,Burdick:2016:DFL:2966276.2894748,Rekatsinas2017HoloCleanHD,Galhardas:2001:DDC:645927.672042}.
Our system uses the declarative constraints during learning.
These constraints may be provided by users or discovered from the data using profiling techniques \cite{Abedjan:2015:PRD:2811716.2811766,MDedup:Koumarelas}.
Our contributions are as follows:

\squishlisttwo
\item 
We introduce and formalize the problem of learning over an inconsistent database (Section~\ref{section:approach-explanation}). 

\item We propose a novel relational learning algorithm called {\it DLearn} to learn over inconsistent data (Section~\ref{section:HDLEARN}). 

\item Every learning algorithm chooses the final result based on its coverage of the training data.
We propose an efficient method to compute the coverage of a definition directly over the heterogeneous database (Section~\ref{sec:generalization}). 

\item We provide an efficient implementation of DLearn over a relational database system   (Section~\ref{section:heterolearn-implementation-details}).
 
\item  We perform an extensive empirical study over real-world datasets and show that DLearn scales to and learns efficiently and effectively over large data.

\end{list}

		\section{Background}
\subsection{Relational Learning}
\label{section:background:learning}
In this section, we review the basic concepts of relational learning over databases without any heterogeneity  \cite{DeRaedt:2010:LRL:1952055,Getoor:SRLBook}. 
We fix two mutually exclusive sets of relation and attribute symbols. 
A database schema $\mS$ is a finite set of relation symbols $R_i$, $1 \leq i \leq n$. Each relation $R_i$ is associated with a set of attribute symbols denoted as $R_i(A_1,\ldots,A_m)$. We denote the domain of values for attribute $A$ as $dom(A)$. Each database instance $I$ of schema $\mS$ maps a finite set of tuples to every relation $R_i$ in $\mS$. Each tuple $t$ is a function that maps each attribute symbol in $R_i$ to a value from its domain. 
We denote the value of the set of attributes $X$ of tuple $t$ in the database $I$ by $t^{I}[X]$ or
$t[X]$ if $I$ is clear from the context. Also, when it is clear from the context, 
we refer to an instance of a relation $R$ simply as $R$.
An {\it atom} is a formula in the form of $R(u_1,$ $\ldots,$ $u_n)$,
where $R$ is a relation symbol and $u_1,$ $\ldots,$ $u_n$ are {\it terms}. Each term is either a variable or a constant, i.e., value.
A {\it ground atom} is an atom that only contains constants.
A {\it literal} is an atom, or the negation of an atom.
A {\it Horn clause} (clause for short) is a finite set of literals that contains exactly one positive literal.
A {\it ground clause} is a clause that only contains ground atoms.
Horn clauses are also called Datalog rules (without negation) or conjunctive queries.
A {\it Horn definition} is a set of Horn clauses with the same positive literal, i.e., non-recursive Datalog program or union of conjunctive queries.
Each literal in the body is {\it head-connected} if it has a variable shared with the head literal or another head-connected literal.

Relational learning algorithms learn first-order logic definitions from an input relational database and training examples.
Training examples $E$ are usually tuples of a single {\it target relation}, and express positive ($E^+$) or negative ($E^-$) examples.
The input relational database is also called {\it background knowledge}.
The {\it hypothesis space} is the set of all possible first-order logic definitions that the algorithm can explore. It is usually restricted to Horn definitions to keep learning efficient. 
Each member of the hypothesis space is a {\it hypothesis}.
Clause $C$ {\it covers} an example $e$ if $I \wedge C \models e$, where $\models$ is the entailment operator, i.e., if $I$ and $C$ are true, then $e$ is true. 
Definition $H$ covers an example $e$ if at least one its clauses covers $e$.
The goal of a learning algorithm is to find the definition in the hypothesis space that covers 
all positive and the fewest negative examples as possible. 

\begin{example}
IMDb contains the tuples {\it movie (10,`Star Wars: Episode IV - 1977', 1977), mov2genres(10, `comedy'), and \\ mov2releasedate(10, `May', 1977)}. 
Therefore, the definition that indicates that high grossing movies are often released in May and their genre is {\it comedy} shown in Section~\ref{section:heterolearn-introduction} covers the positive example {\it highGrossing(`Star Wars: Episode IV - 1977')}.
\end{example}

Most relational learning algorithms follow a covering approach illustrated in Algorithm~\ref{algorithm:castor-covering}  \cite{Mihalkova:ICML:07,progolem,castor:SIGMOD17,Quinlan:FOIL,QuickFOIL}. 
The algorithm constructs one clause at a time using the {\it LearnClause} function. 
If the clause satisfies a criterion, e.g., covers at least a certain fraction of the positive examples and does {\it not} cover more than a certain fraction of negative ones,
the algorithm adds the clause to the learned definition and discards the positive examples covered by the clause. 
It stops when all positive examples are covered by the learned definition. 
\begin{algorithm} 
	\SetKwInOut{Input}{Input}
	\SetKwInOut{Output}{Output}
	\Input{Database instance $I$, examples $E$}
	\Output{Horn definition $H$}
	$H = \{\}$\;
	$U = E^+$\;
	\While{$U$ is not empty }{
		$C =$ LearnClause$(I,U,E^-)$\;
		\If{$C$ satisfies minimum criterion} {
			$H = H \cup C$\;
			$U = U - \{ e \in U | H \wedge I \models e \}$\;
		}
	}
	return $H$ \;
	\caption{Covering approach algorithm.}
	\label{algorithm:castor-covering}
\end{algorithm}
\vspace{-10pt}

\subsection{Matching Dependencies}
\label{sec:MD}
Learning over databases with heterogeneity in representing values may deliver inaccurate answers as the same entities and values may be represented under different names. Thus, one must resolve these representational differences to produce a high-quality database to learn an effective definition.
The database community has proposed declarative matching and resolution rules to express the domain knowledge about matching and resolution  \cite{4497412,DBLP:conf/kr/BahmaniBKL12,Benjelloun:2009:SGA:1541533.1541538,Burdick:2016:DFL:2966276.2894748,Fan2009ReasoningAR,Galhardas:2001:DDC:645927.672042,Hernandez:2013:HHS:2452376.2452440,Hernandez:1995:MPL:223784.223807,Weis:2008:IDD:1454159.1454165}.
{\it Matching dependencies (MD)} are a popular type of such declarative rules, which provide a powerful method of expressing domain knowledge on matching values \cite{Fan2009ReasoningAR,Bertossi2011DataCA,Bahmani2015ERBloxCM,DBLP:series/synthesis/2012Fan,MDedup:Koumarelas}.
Let $\mS$ be the schema of the original database and $R_1$ and $R_2$ two distinct relations in $\mS$.
Attributes $A_1$ and $A_2$ from relations $R_1$ and $R_2$, respectively, are comparable if they share the dame domain.
MD $\sigma$ is a sentence of the form $R_1[A_1] \approx_{dom(A_1)} R_2[B_1],$ $\ldots,$ 
$R_1[A_n] \approx_{dom(A_n)} R_2[B_n]$ 
$\rightarrow$ $R_1[C_1] \rightleftharpoons R_2[D_1],$ $\dots,$ $R_1[C_m] \rightleftharpoons R_2[D_m]$, 
where $A_i$ and $C_j$ are comparable to $B_i$ and $D_j$, respectively, $1 \leq i \leq n$, 
and $1 \leq j \leq m$. Operation $\approx_{d}$ is a similarity operator defined over domain $d$ and
$R_1[C_j] \rightleftharpoons R_2[D_j]$,$1 \leq j \leq m$, indicates that the values of 
$R_1[C_j]$ and $R_2[D_j]$ refer to the same value, i.e., are interchangeable.
Intuitively, the aforementioned MD says that if the values of $R_1[A_i]$ and $R_2[B_i]$ are sufficiently similar, the values of $R_1[C_j]$ and $R_2[D_j]$ are different representations of the same value. 
For example, consider again the database that contains relations from {\it IMDb} and {\it BOM} whose schema fragments are shown in Table~\ref{table:schemas}. According to our discussion in Section~\ref{section:heterolearn-introduction}, one can define the following MD $\sigma_1:$
$\textit{movies[title]} \approx \textit{highBudgetMovies[title]}$ $\rightarrow$ $\textit{movies[title]} \rightleftharpoons \textit{highBudgetMovies[title]}$.
The exact implementation of the similarity operator depends on the underlying domains of attributes. 
Our results are orthogonal to the implementation details of the similarity operator. 
In the rest of the paper, we use $\approx_{d}$ operation only between comparable attributes.
For brevity, we eliminate the domain $d$ from $\approx_{d}$ when it is clear from the context or the 
results hold for any domain $d$.
We also denote $R_1[A_1] \approx R_2[B_1],$ $\ldots,$ $R_1[A_n] \approx R_2[B_n]$ in an MD
as $R_1[A_{1 \ldots n}] \approx $ $R_2[B_{1 \ldots n}]$.
An MD $R_1[A_{1 \ldots n}] \approx $ $R_2[B_{1 \ldots n}]$
$\rightarrow$ $R_1[C_1] \rightleftharpoons R_2[D_1],$ $\dots,$ $R_1[C_m] \rightleftharpoons R_2[D_m]$ is 
equivalent to a set of MDs  $R_1[A_{1 \ldots n}] \approx $ $R_2[B_{1 \ldots n}]$
$\rightarrow$ $R_1[C_1] \rightleftharpoons R_2[D_1]$, 
$R_1[A_{1 \ldots n}] \approx $ $R_2[B_{1 \ldots n}]$ $\rightarrow$ $R_1[C_2] \rightleftharpoons R_2[D_2]$, 
$\dots$, 
$\rightarrow$ $R_1[C_1] \rightleftharpoons R_2[D_1]$ $\rightarrow$ $\dots,$ $R_1[C_m] \rightleftharpoons R_2[D_m]$.
Thus, for the rest of the paper, we assume that each MD is in the form of 
$R_1[A_{1 \ldots n}] \approx $ $R_2[B_{1 \ldots n}]$ $\rightarrow$ $R_1[C] \rightleftharpoons R_2[D]$, 
where $C$ and $D$ are comparable attributes of $R_1$ and $R_2$, respectively.
Given a database with MDs, one must enforce the MDs to generate a high-quality database. 
Let tuples $t_1$ and $t_2$ belong to $R_1$ and $R_2$ in database $I$ of schema $\mS$, respectively, such that $t_1^{I}[A_i] \approx t_2^{I}[B_i]$, $1 \leq i \leq n$, denoted as 
$t_1^{I}[A_{1 \ldots n}] \approx$ $t_2^{I}[B_{1 \ldots n}]$ for brevity.
To enforce the MD $\sigma:$ 
$R_1[A_{1 \ldots n}] \approx $ $R_2[B_{1 \ldots n}] \rightarrow$ 
$R_1[C] \rightleftharpoons R_2[D]$ on $I$, one must make the values
of $t_1^{I}[C]$ and $t_2^{I}[D]$ identical as they actually refer to the same value \cite{Fan2009ReasoningAR,Bertossi2011DataCA}. 
For example, if attributes $C$ and $D$ contain titles of movies, 
one unifies both values {\it Star Wars - 1977} and {\it Star Wars - IV} to 
{\it Star Wars Episode IV - 1977} as it deems this value as the one to which 
$t_1^{I}[C]$ and $t_2^{I}[D]$ refer.
The following definition formalizes the concept of applying an MD to the tuples $t_1$ and $t_2$ on $I$. 
\begin{definition}
\label{def:MDEnforce}
Database $I'$ is the immediate result of enforcing MD $\sigma$ on $t_1$ and $t_2$ in $I$, denoted by $(I, I')_{[t_1,t_2]} \models \sigma$ if  
\begin{enumerate} 
	\item $t_1^{I}[A_{1 \ldots n}] \approx t_2^{I}[B_{1 \ldots n}]$, but $t_1^{I}[C] \neq t_2^{I}[D]$;
	\item $t_1^{I'}[C] = t_2^{I'}[D]$ ; and
	\item $I$ and $I'$ agree on every other tuple and attribute value.
\end{enumerate}
\end{definition}
One may define a unification function over some domains to map the values that refer to the same value to the correct value in the cleaned instance. 
It is, however, usually difficult to define such a function due to the lack of knowledge about the correct value. 
For example, let $C$ and $D$ in Definition~\ref{def:MDEnforce} contain information about 
names of people and $t_1^{I}[C]$ and $t_2^{I}[D]$ have values 
{\it J. Smth} and {\it Jn Sm}, respectively, which according to an MD refer to the same actual name, which is {\it Jon Smith}. 
It is not clear how to compute {\it Jon Smith} using the values of $t_1^{I}[C]$ and $t_2^{I}[D]$. 
We know that the values of $t_1^{I'}[C]$ and $t_2^{I'}[D]$ will be identical after enforcing $\sigma$, 
but we do not usually know their exact values. 
Because we aim at developing learning algorithms that are efficient and effective over databases from various domains, we do {\it not} fix any matching method in this paper. We assume that matching every pair of values $a$ and $b$ in the database creates a fresh value denoted as $v_{a,b}$.

Given the database $I$ with the set of MDs $\Sigma$, $I'$ is {\it stable} if $(I, I')_{[t_1,t_2]} \models \sigma$ for all $\sigma \in \Sigma$ and all tuples $t_1, t_2 \in I'$.
In a stable database instance, all values that represent the same data item according to the database MDs are assigned equal values. Thus, it does not have any heterogeneities. 
Given a database $I$ with set of MDs $\Sigma$, one can produce a stable instance for $I$ by starting from $I$ and
iteratively applying each MD in $\Sigma$ according to Definition~\ref{def:MDEnforce} finitely many times \cite{Fan2009ReasoningAR,Bertossi2011DataCA}.
Let $I, I_1, \ldots, I_k$ denote the sequence of databases produced by applying MDs according to Definition~\ref{def:MDEnforce} starting from $I$ such that $I_k$ is stable. 
We say that $(I, I_k)$ satisfy $\Sigma$ and denote it as $(I, I_k) \models \Sigma$.
A database may have many stable instances depending on the order of MD applications~\cite{Bertossi2011DataCA,Fan2009ReasoningAR}.
\begin{example}
\label{ex:multipleInstance}
	Let {\it (10,`Star Wars: Episode IV - 1977', 1977)} and {\it (40,`Star Wars: Episode III - 2005', 2005)} be tuples in relation {\it movies} and {\it (`Star Wars')} be a tuple in relation {\it highBudgetMovies} whose schemas are shown in Table~\ref{table:schemas}. Consider MD $\sigma_1:$ $\textit{movies[title]} \approx \textit{highBudgetMovies[title]}$ 
	$\rightarrow$ $\textit{movies[title]} \rightleftharpoons$
	$ \textit{highBudgetMovies[title]}$. 
	Let {\it `Star Wars: Episode IV - 1977'} $\approx$ {\it `Star Wars'} and  {\it `Star Wars: Episode III - 2005'} $\approx$ {\it `Star Wars'} be true.
	Since the movies with titles {\it `Star Wars: Episode IV - 1977'} and {\it `Star Wars: Episode III - 2005'} are different movies with distinct titles, one can unify the title in the tuple {\it (`Star Wars')} in {\it highBudgetMovies} with only one of them in each stable instance. 
	Each alternative leads to a distinct instance.
\end{example}
MDs may {\it not} have perfect precision. If two values are declared similar according to an MD, 
it does not mean that they represent the same real-world entities. But, it is more likely for them to represent the same value than the ones that do not match an MD.
Since it may be cumbersome to develop complex MDs that are sufficiently accurate, 
researchers have proposed systems that automatically discover MDs from the database content \cite{MDedup:Koumarelas}.

\vspace{-5pt}
\subsection{Conditional Functional Dependencies}
\label{sec:CFD}
Users usually define integrity constraints (IC) to ensure the quality of the data.
Conditional functional dependencies (CDF) have been useful in defining quality rules for cleaning data
\cite{DBLP:conf/pods/Fan08,10.14778/1952376.1952378,10.14778/1453856.1453900,Cong:Gao:2007,10.14778/1952376.1952378,DBLP:series/synthesis/2012Fan}. 
They extend functional dependencies, which are arguably the most widely used ICs \cite{DBBook}.
Relation $R$ with sets of attributes $X$ and $Y$ satisfies FD $X \rightarrow Y$ 
if every pairs of tuples in $R$ that agree on the values of $X$ will also agree on the values
of $Y$. 
A CFD $\phi$ over $R$ is a form $(X \rightarrow Y, t_{p})$ where $X \rightarrow Y$ is an FD over $R$
and $t_p$ is a tuple pattern over $X \cup Y$. For each attribute $A \in$ $X \cup Y$, 
$t_p[A]$ is either a constant in domain of $A$ or an unnamed variable denoted as `-' that takes values from the domain of $A$. The attributes in $X$ and $Y$ are separated by $\mid\mid$ in $t_p$.
For example, consider relation {\it mov2locale}({\it title}, {\it language}, {\it country}) in BOM.
The CFD $\phi_1$: ({\it title}, {\it language} $\rightarrow$ {\it country}, (-, English $\mid\mid$ -) )
indicates that {\it title} uniquely identifies {\it country} for tuples whose language is English. 
Let $\asymp$ be a predicate over data values and unnamed variable `-', where $a \asymp b$ if either
$a=b$ or $a$ is a value and $b$ is `-'. The predicate $\asymp$ naturally extends to tuples, e.g., 
(`Bait', English, USA) $\asymp$ (`Bait', -, USA). Tuple $t_1$ {\it matches} $t_2$ if $t_1 \asymp$ $t_2$.
Relation $R$ satisfies the CFD $(X \rightarrow Y, t_{p})$ iff for each pair of tuples 
$t_1, t_2$ in the instance if $t_1[X] = t_2[X]$ $\asymp t_p[X]$, then $t_1[Y] = t_2[Y]$ $\asymp t_p[Y]$.
In other words, if $t_1[X]$ and $t_2[X]$ are equal and match pattern $t_p[X]$, $t_1[Y]$ and 
$t_2[Y]$ are equal and match $t_p[Y]$. 
A relation satisfies a set of CFDs $\Phi$, if it satisfies every CFD in $\Phi$.
For each set of CFDs $\Phi$, we can find an equivalent set of CFDs whose members have a single 
attribute on their right-hand side \cite{Cong:Gao:2007,10.14778/1952376.1952378,DBLP:series/synthesis/2012Fan}. 
For the rest of the paper, we assume that each CFD has a single attribute on its right-hand side.

CFDs may be violated in real-world and heterogeneous datasets \cite{10.14778/1952376.1952378,10.14778/1453856.1453900}. For example, 
the pair of tuples $r_1:$(`Bait', English, USA) and $r_2:$(`Bait', English, Ireland) in 
{\it movie2locale} violate $\phi_1$.
One can use attribute value modifications to repair violations of a CFD in a relation and generate a 
repaired relation that satisfy the CFD \cite{10.1145/1066157.1066175,10.1007/3-540-45653-8_39,10.5555/645505.656435,Cong:Gao:2007,10.14778/1952376.1952378,10.1145/1514894.1514901,DBLP:series/synthesis/2012Fan}. 
For instance, one may repair the violation of $\phi_1$ in $r_1$ and $r_2$ by
updating the value of title or language in one of the tuples to value other than Bait or English, respectively.
One may also repair this violation by replacing the countries in these tuples with the same value.
Inserting new tuples do not repair CFD violations and one may simulate tuple deletion using 
value modifications. Moreover, removing tuples leads to unnecessary loss of information for attributes 
that do not participate in the CFD.
Modifying attribute values is also sufficient to resolve CFD violations \cite{Cong:Gao:2007,10.14778/1952376.1952378}.
Thus, given a pair of tuples $t_1$ and $t_2$ in $R$ that violate CFD $(X \rightarrow A, t_{p})$, 
to resolve the violation, one must either modify $t_1[A]$ (resp. $t_2[A]$) such that $t_1[A] = t_2[A]$ and
$t_1[A] \asymp t_p[A]$, update $t_1[X]$ (resp. $t_2[X]$) such that $t_1[X] \nasymp t_p[X]$ 
(resp. $t_2[X] \nasymp t_p[X]$) or $t_1[X] \neq t_2[X]$. 
Let $R$ be a relation that violates CFD $\phi$. 
Each updated instance of $R$ that is generated by applying the aforementioned repair operations 
and does not contain any violation of $\phi$ is a {\it repair} of $R$.
As there are multiple fixes for each violation, there may be many repairs for each relation.

As opposed to FDs, a set of CFDs may be inconsistent, i.e., there is not any non-empty database that satisfies them \cite{4221723,Cong:Gao:2007,10.14778/1952376.1952378,DBLP:series/synthesis/2012Fan}.
For example, the CFDs $(A \rightarrow B, a_1||b_1)$ and $(B \rightarrow A, b_1||a_2)$ over relation $R(A,B)$
cannot be both satisfied by any non-empty instance of $R$. The set of CFDs used in cleaning is consistent
\cite{4221723,Cong:Gao:2007,10.14778/1952376.1952378,DBLP:series/synthesis/2012Fan}. We refer the reader to
\cite{4221723} for algorithms to detect inconsistent CFDs.

\section{Semantic of Learning}
\label{section:approach-explanation}

\subsection{Different Approaches}
\label{sec:ApproachesToLearning}
Let $I$ be an instance of schema $\mS$ with MDs $\Sigma$ that violate some CFDs $\Phi$.
A {\it repair} of $I$ is a stable instance of $I$ that satisfy $\Phi$. The values in $I$ are repaired
to satisfy $\Phi$ using the method explained in Section~\ref{sec:CFD}.
Given $I$ and a set of training examples $E$, we wish to learn a definition for a target relation $T$ 
in terms of the relations in $\mS$. 
Obviously, one may not learn an accurate definition by applying current learning algorithms over 
$I$ as the algorithm may consider different occurrences of the same value to be distinct or learn patterns that 
are induced based on tuples that violate CFDs.
One can learn definitions by generating all possible repairs of $I$,
learning a definition over each repair separately, and computing a union (disjunction) of all learned definitions. 
Since the discrepancies are resolved in repaired instances, this approach may learn accurate definitions. 

However, this method is neither desirable nor feasible for large databases. 
As a large database may have numerous repairs, it takes a great deal of time and storage to compute and materialize all of them.
Moreover, we have to run the learning algorithm once for each repair, which may take an extremely long time.
More importantly, as the learning has been done separately over each repair, it is {\it not} clear whether the final definition is sufficiently effective considering the information of all stable instances. 
For example, let database $I$ have two repairs $I^s_1$ and $I^s_2$ over which the aforementioned approach learns definitions $H_1$ and $H_2$, respectively. $H_1$ and $H_2$ must cover a relatively small number of negative examples over $I^s_1$ and $I^s_2$, respectively. 
However, $H_1$ and $H_2$ may cover a lot of negative examples over $I^s_2$ and $I^s_1$, respectively. 
Thus, the disjunction of $H_1$ and $H_2$ will {\it not be effective} considering the information 
in both $I^s_1$ and $I^s_2$. 
Hence, it is not clear whether the disjunction of $H_1$ and $H_2$ is the definition that covers all positive and the least negative examples over $I^s_2$ and $I^s_1$. 
Also, it is not clear how to encode usably the final result as we may end up with numerous definitions.

Another approach is to consider only the information shared among all repairs for learning. 
The resulting definition will cover all positive and the least negative examples considering the information common among all repaired instances. 
This idea has been used in the context of query answering over inconsistent data, i.e., consistent query answering \cite{Arenas:PODS:99:Consistent,Bertossi2011DataCA}. 
However, this approach may lead to ignoring many positive and negative examples as their connections to other relations in the database may {\it not} be present in all stable instances.
For example, consider the tuples in relations {\it movies} and {\it highBudgetMovies} in Example~\ref{ex:multipleInstance}.
The training example {\it (`Star Wars')} has different values in different stable instances of the database, therefore, it will be ignored.
It will also be connected to two distinct movies with vastly different properties in each instance.
Similarly, repairing the instance to satisfy the violated CFDs may further reduce the amount of training 
examples shared among all repairs.
The training examples are usually costly to obtain and the lack of enough training examples may results in inaccurate learned definitions.
Because in a sufficiently heterogeneous database, most positive and negative examples may not be common 
among all repairs, the learning algorithm may learn an inaccurate or simply an empty definition.

Thus, we hit a middle-ground. We follow the approach of learning directly over the original database. 
But, we also give the language of definitions and semantic of learning enough flexibility to take advantage of as much (training) information as possible. 
Each definition will be a compact representation of a set of definitions, each of which is sufficiently accurate over some repairs. 
If one increases the expressivity of the language, learning and checking coverage for each clause may become inefficient \cite{Eiter:1997:DD:261124.261126}. 
We ensure that the added capability to the language of definitions is minimal so learning remains efficient. 

\subsection{Heterogeneity in Definitions}
\label{sec:HetroDef}
We represent the heterogeneity of the underlying data in the language of the learned definitions. 
Each new definition encapsulates the definitions learned over the repairs of the underlying database.
Thus, we add the similarity operation, $x \approx y$, to the language of Horn definitions.
We also add a set of new (built-in) relation symbols $V_{c}$ with arity two called {\it repair relations}
to the set of relation symbols used by the Datalog definitions over schema $\mS$. 
A literal with a repair relation symbol is a {\it repair literal}. 
Each repair literal $V_{c}(x, v_x)$ in a definition $H$ represents replacing 
the variable (or constant) $x$ in (other) existing literals in $H$ with variable $v_x$ if 
condition $c$ holds. Condition $c$ is a conjunction of 
$=$, $\neq$, and $\approx$ relations over the variables and constants in the clause. 
Each repair literal reflects a repair operation explained in Sections~\ref{sec:MD} and
\ref{sec:CFD} for an MD or violated CFD over the underlying database.
The condition $c$ is computed according to the corresponding MD or CFD.
Finally, we add a set of literals with $=$, $\neq$, and $\approx$ relations called {\it restriction literals}
to establish the relationship between the replacement variables, e.g, $v_x$, according to the 
corresponding MDs and CFDs.
Consider again the database created by integrating IMDb and BOM datasets, whose schema fragments are in Table~\ref{table:schemas},
with MD $\sigma_1:$ $\textit{movies[title]} \approx \textit{highBudgetMovies[title]}$ $\rightarrow$ $\textit{movies[title]} \rightleftharpoons \textit{highBudgetMovies[title]}$. 
We may learn the following definition for the target relation {\it highGrossing}.
\begin{align*}
		\mathit{highGrossing}(x) \leftarrow & \mathit{movies}(y, t, z), \mathit{mov2genres}(y,\mathit{`comedy\text{'}}),  \\
		& \mathit{highBudgetMovies}(x), x\approx t, \mathit{V_{x\approx t}}(x,v_x), \\
		& \mathit{V_{x\approx t}}(t,v_t), v_x = v_t.
\end{align*}
\noindent
The repair literals $\mathit{V_{x\approx t}}(x,v_x)$ and $\mathit{V_{x\approx t}}(t,v_t)$ represent
the repairs applied to $x$ and $t$ to unify their values to a new one according to $\sigma_1$.
We add equality literal $v_x=v_t$ to restrict the replacements according to the corresponding MD.

We also use repair literals to fix a violation of a CFD in a clause. These repair literals reflect the operations
explained in Section~\ref{sec:CFD} to fix the violation of a CFD in a relation. 
The resulting clause represents possible repairs for a violation of a CFD in the clause. 
A variable may appear in multiple literals in the body of a clause and some repairs may modify only some of the occurrences of the variable, e.g., the example on BOM database in Section~\ref{sec:CFD}.
Thus, before adding repair literals for both MDs and CFDs, 
we replace each occurrence of a variable with a fresh one and add equality literals, i.e., {\it induced equality literals}, to maintain the connection between their replacements. 
Similarly, we replace each occurrence of the constant with a fresh variable and 
use equality literals to set the value the variable equal to the constant in the clause.
\vspace{-5pt} 
\begin{example}
\label{ex:CFDRepair}
Consider the following clause, that may be a part of a learned clause over the integrated 
IMDb and BOM database for {\it highGrossing}.
\begin{align*}
		\mathit{highGrossing}(x) \leftarrow & \mathit{mov2locale}(x, English, z),\\
		&\mathit{mov2locale}(x, English, t).
\end{align*}
\noindent
This clause reflects a violation of CFD $\phi_1$ from Section~\ref{sec:CFD} in the underlying database as
it indicates that English movies with the same title are produced in different countries.  
We first replace each occurrence of repeated variable $x$ 
with a new variable and then add the repair literals. Due to the limited space, we do not show the repair literals and their conditions for modifying the values of constant 'English'. Let condition $c$ be $x_1=x_2 \land z\neq t$.
\begin{align*}
		\mathit{highGrossing}&(x_1) \leftarrow  \mathit{mov2locale}(x_1, English, z),\\
		& \mathit{mov2locale}(x_2, English, t), x_1= x_2, \mathit{V_c}(x_1,v_{x_1}),\\
		& \mathit{V_{c}}(x_2,v_{x_2}), v_{x_1} \neq x_2, v_{x_2} \neq x_1, \mathit{V_c}(z,t),\\
		& \mathit{V_c}(t,z), \mathit{V_c}(z,v_{z}), \mathit{V_c}(t,v_{t}), v_{z} = v_{t}.
\end{align*}
\end{example}
\noindent

We call a clause (definition) {\it repaired} if it does {\it not} have any repair literal.
Each clause with repair literals represents a set of repaired clauses. 
We convert a clause with repair literals to a set of repaired clauses by iteratively applying repair literals to
and eliminating them from the clause. 
To apply a repair literal $V_c(x,v_x)$ to a clause, we first evaluate $c$ considering the (restriction) literals 
in the clause. If $c$ holds, we replace all occurrences of $x$ with $v_x$ in all literals and the conditions of the other repair literals in the clause and remove $V_c(x,v_x)$. Otherwise, we only eliminate $V_c(x,v_x)$ from 
the clause. We progressively apply all repair literals until no repair literal is left.
Finally, we remove all restriction and induced equality literals that contain at least one variable that does not appear in any literal with a schema relation symbol.
The resulting set is called the {\it repaired clauses} of the input clause. 
\vspace{-5pt}
\begin{example}
	\label{ex:MDapplication}
	Consider the following clause over the movie database of IMDb and BOM.
	\begin{align*}
	\mathit{highGrossing}(x) \leftarrow&  \mathit{movies}(y, t, z), \mathit{mov2genres}(y,\mathit{`comedy\text{'}}), \\
	& \mathit{highBudgetMovies}(x), x\approx t, \mathit{V_{x\approx t}}(x,v_x), \\
	& \mathit{V_{x\approx t}}(t,v_t), v_x = v_t.
	\end{align*}
	The application of repair literals $\mathit{V_{x\approx t}}(x,v_x)$ and $\mathit{V_{x\approx t}}(t,v_t)$  results in the following clause.
	\begin{align*}
	\mathit{highGrossing}(v_{x}) \leftarrow&  \mathit{movies}(y, v_{t}, z), \mathit{mov2genres}(y,\mathit{`comedy\text{'}}),\\
	& \mathit{highBudgetMovies}(v_{x}), v_x = v_t.
	\end{align*}
\end{example}
\noindent

Similar to the repair of a database based on MDs and CFDs, 
the application of a set of repair literals to a clause may create multiple repaired clauses depending on the order
by which the repair literals are applied.
\begin{example}
\label{example:multipleStableClauses}
Consider a target relation $T(A)$, an input database with schema $\{R(B)$, $S(C)\}$, and MDs
$\phi_1:$ $T[A] \approx R[B] \rightarrow T[A] \rightleftharpoons R[B]$ and $\phi_2:$ $T[A] \approx S[C] \rightarrow T[A] \rightleftharpoons S[C]$.
The definition $H:$ $T(x) \leftarrow R(y), x \approx y, V_{x \approx y}(x, v_x),$ 
$V_{x \approx y}(y, v_y), v_x = v_y, S(z), x \approx z,$ $V_{x \approx z}(x, u_x),$ 
$V_{x \approx z}(z, v_z), u_x = v_z.$ 
over this schema has two repaired definitions: 
$H'_1:$ $T(v_x) \leftarrow$ $R(v_y), v_x = v_y, S(z).$ and 
$H'_2:$ $T(u_x) \leftarrow$ $R(y), S(v_z), u_x = v_z.$
As another example, the application of each repair literal 
in the clause of Example~\ref{ex:CFDRepair} results in 
a distinct repaired clause. For instance, applying $\mathit{V_c}(x_1,v_{x_1})$ replaces $x_1$ with
$v_{x_1}$ in all literals and conditions of the repair literals and results in the following.
\begin{align*}
		\mathit{highGrossing}&(v_{x_1}) \leftarrow  \mathit{mov2locale}(v_{x_1}, English, z),\\
		& \mathit{mov2locale}(x_2, English, t),\mathit{V_{c}}(x_2,v_{x_2}), v_{x_1} \neq x_2. 
\end{align*}
\end{example}
\noindent 
As Example~\ref{example:multipleStableClauses} illustrates, repair literals provide a compact representation of multiple learned clauses where each may explain the patterns in the training data in some repair of the input database.
Given an input definition $H$, the {\it repaired definitions} of $H$ are a set of definitions where each one contains exactly one repaired clause per each clause in $H$.


\subsection{Coverage Over Heterogeneous Data}
A learning algorithm evaluates the score of a definition according to the number of its covered positive and negative examples. 
One way to measure the score of a definition is to compute the difference of the number of positive and negative examples covered by the definition \cite{DeRaedt:2010:LRL:1952055,Quinlan:FOIL,QuickFOIL}. 
Each definition may have multiple repaired definitions each of which may cover a different number of positive and negative examples on the repairs of the underlying database. 
Thus, it is not clear how to compute the score of a definition.

One approach is to consider that a definition covers a positive example if at least one of 
its repaired definitions covers it in some repaired instances. 
Given all other conditions are the same, this approach may lead to learning a definition with numerous 
repaired definitions where each may not cover sufficiently many positive examples. 
Hence, it is not clear whether each repaired definition is accurate.
A more restrictive approach is to consider that a definition covers a positive example if all 
its repaired definitions cover it. 
This method will deliver a definition whose repaired definitions have high positive coverage over repaired instances. 
There are similar alternatives for defining coverage of negative examples. 
One may consider that a definition covers a negative example if all of its repaired definitions cover it. 
Thus, if at least one repaired definition does not cover the negative example, 
the definition will not cover it. 
This approach may lead to learning numerous repaired definitions, which cover many negative examples. 
On the other hand, a restrictive approach may define a negative example covered by a definition if at least one of its repaired definitions covers it. 
In this case, generally speaking, each learned repaired definition will {\it not} cover too many negative examples.
We follow a more restrictive approach. 
\begin{definition}
\label{def:coverExample}
A definition $H$ covers a positive example $e$ w.r.t. to database $I$ iff every 
repaired definition of $H$ covers $e$ in some repairs of $I$.
\end{definition}
\noindent
\begin{example}
\label{example:CoverageHetero}
Consider again the schema, MDs, and definition $H$ in 
Examples~\ref{example:multipleStableClauses} and the database of this schema with training example $T(a)$ and tuples $\{ R(b), S(c)\}$. 
Assume that $a \approx b$ and $a \approx c$ are true. 
The database has two stable instances $I'_1:$ $\{T(v_{a,b}), R(v_{a,b})$, $S(c)\}$ and $I'_2:$ $\{ T(v_{a,c}), R(b)$, $S(v_{a,c})\}$. 
Definition $H$ covers the single training example in the original database according to Definition~\ref{def:coverExample} as its repaired definitions $H'_1$ and $H'_2$ cover the training example in repaired instances $I'_1$ and $I'_2$, respectively. 
\end{example}
\noindent
Definition~\ref{def:coverExample} provides a more flexible semantic than considering only the common information between all repaired instances as described in Section~\ref{sec:ApproachesToLearning}. 
The latter semantic considers that the definition $H$ covers a positive example if it covers the example in all repaired instances of a database. As explained in Section~\ref{sec:ApproachesToLearning}, this approach may lead to ignoring many if not all examples. 

\begin{definition}
\label{def:coverExampleNegative}
A definition $H$ covers a negative example $e$ with regard to database $I$ if at least one of the repaired definitions of $H$ covers $e$ in some repairs of $I$.
\end{definition}

		\section{DLearn}
\label{section:HDLEARN}
In this section, we propose a learning algorithm called {\it DLearn} for learning over heterogeneous data efficiently.  It follows the approach used in the bottom-up relational learning algorithms \cite{golem,progolem,Mihalkova:ICML:07,castor:SIGMOD17}.
In this approach, the {\it LearnClause} function in Algorithm~\ref{algorithm:castor-covering} has two steps. It first builds the most specific clause in the hypothesis space that covers a given positive example, called a {\it bottom-clause}. 
Then, it generalizes the bottom-clause to cover as most positive and as fewest negative examples as possible. 
DLearn extends these algorithms by integrating the input MDs and CFDs into the 
learning process to learn over heterogeneous data. 


\subsection{Bottom-clause Construction}
\label{section:bottom-clause-construction}
A {\it bottom-clause} $C_e$ associated with an example $e$ is the most specific clause in the hypothesis space that covers $e$ relative to the underlying database $I$. 
Let $I$ be the input database of schema $\mS$ and the set of MDs $\Sigma$ and CFDs $\Phi$.
The bottom-clause construction algorithm consists of two phases.
First, it finds all the information in $I$ relevant to $e$.
The information relevant to example $e$ is the set of tuples $I_{e} \subseteq I$ that are connected to $e$.
A tuple $t$ is connected to $e$ if we can reach $t$ using a sequence of exact or approximate 
(similarity) matching operations, starting from $e$.
Given the information relevant to $e$, DLearn creates the bottom-clause $C_e$.

\begin{table}
	\centering
	\caption{ Example movie database. }
	\vspace{-10pt}
	\begin{tabular} { l l }
		\hline
		{\small movies(m1,Superbad (2007),2007)}&  {\small mov2genres(m1,comedy)} \\
		{\small movies(m2,Zoolander (2001),2001)} & {\small mov2genres(m2,comedy)} \\
		{\small movies(m3,Orphanage (2007),2007)} & {\small mov2genres(m3,drama)} \\
		{\small mov2countries(m1,c1)} &  {\small countries(c1,USA)}\\
		{\small mov2countries(m2,c1)} & {\small countries(c2,Spain)} \\
		{\small mov2countries(m3,c2)} &  {\small englishMovies(m1)} \\
		{\small mov2releasedate(m1,August,2007)} & {\small englishMovies(m2)} \\ 
		{\small mov2releasedate(m2,September,2001)} & s{\small englishMovies(m3)} \\
		\hline
	\end{tabular}
	\label{table:example}
	\vspace{-15pt}
\end{table}

\begin{example}
	\label{example-relevant-tuples}
	Given example {\it highGrossing(Superbad)}, database in Table~\ref{table:example}, and MD
	$\sigma_2:$ $\mathit{highGrossing[title]} \approx \mathit{movies[title]}$ $\rightarrow 
	\mathit{highGrossing[title]} \rightleftharpoons \mathit{movies [title]}$,
	DLearn finds the relevant tuples {\it movies(m1, Superbad (2007), 2007), 
		mov2genres(m1, comedy), 
		mov2countries(m1, c1), englishMovies(m1),\\ mov2releasedate(m1, August, 2007)}, and {\it countries(c1, USA)}.
	As the movie title in the training example, e.g., {\it Superbad}, does not exactly match with the movie title in the movies relation, e.g., {\it Superbad (2007)}, the tuple {\it movies(m1, Superbad (2007), 2007)} is obtained through an approximate match and similarity search according to $\sigma_2$. 
	We get others via exact matches.
\end{example}

To find the information relevant to $e$, DLearn uses Algorithm~\ref{algorithm:bottom-clause-original-mds}.
It maintains a set $M$ that contains all seen constants.
Let $e=T(a_1, \ldots, a_n)$ be a training example.
First, DLearn adds $a_1, \ldots, a_n$ to $M$.
These constants are values that appear in tuples in $I$.
Then, DLearn searches all tuples in $I$ that contain at least one constant in $M$ and adds them to $I_{e}$.
For exact search, DLearn uses simple SQL selection queries over the underlying relational database.
For similarity search, DLearn uses MDs in $\Sigma$.
If $M$ contains constants in some relation $R_i$ and 
given an MD $\sigma' \in \Sigma$, 
$\sigma':$ $R_1[A_{1 \ldots n}] \approx $ $R_2[B_{1 \ldots n}]$  $\rightarrow R_1[C] \rightleftharpoons R_2[D]$
DLearn performs a similarity search over $R_2[B_j]$, $1 \leq j \leq n$ 
to find relevant tuples in $R_2$, denoted by $\psi_{B_i \approx M}(R_2)$.
We store these pairs of tuples that satisfy the similarity match in $I_{e}$ in a table in main memory.
We will discuss the details of the implementation of DLearn over relational database systems in Section~\ref{section:heterolearn-implementation-details}. 
For each new tuple in $I_{e}$, the algorithm extracts new constants and adds them to $M$.
It repeats this process for a fixed number of iterations $d$.

\begin{algorithm}
	\SetKwInOut{Input}{Input}
	\SetKwInOut{Output}{Output}
	\Input{example $e$, \# of iterations $d$} 
	\Output{bottom-clause $C_e$}
	
	$I_{e} = \{\}$\;
	$M = \{\}$ // $M$ stores known constants\;
	
	add constants in $e$ to $M$\;
	
	\For{$i=1$ to $d$}{
		\ForEach{relation $R \in I$}{
			\ForEach{attribute $A$ in $R$}{
				// select tuples with constants in M\;
				$I_R = \sigma_{A \in M}(R)$\;
				
				\If{$\exists$ MD $\sigma' \in \Sigma,$ $\sigma':$ $R_1[A_{1 \ldots n}] \approx $ $R_2[B_{1 \ldots n}]$ $\rightarrow R_1[C] \rightleftharpoons R_2[D]$}{
					$I_R = I_R \cup \psi_{B_j \approx M}(R)$, $1 \leq j \leq n$\;
				}
				
				\ForEach{tuple $t \in I_R$}{
					add $t$ to $I_{e}$ and constants in $t$ to $M$\;
				}
			}
		}
	}
	$C_e=$ create clause from $e$ and $I_{e}$\;
	return $C_e$\;
	
	\caption{DLearn bottom-clause construction algorithm.}
	\label{algorithm:bottom-clause-original-mds}
\end{algorithm}
To create the bottom-clause $C_e$ from $I_{e}$, DLearn first maps each constant in $M$ to a new variable.
It creates the head of the clause by creating a literal for $e$ and replacing the constants in $e$ with their assigned variables.
Then, for each tuple $t \in I_{e}$, DLearn creates a literal and adds it to the body of the clause, replacing each constant in $t$ with its assigned variable.
If there is a variable that appears in more than a single literal, we add the equality literals according to the
method explained in Section~\ref{sec:HetroDef}.
If $t$ satisfies a similarity match according to the table of similarity matches with tuple $t'$, 
we add a similarity literal $s$ per each value match in $t$ and $t'$ to the clause.
Let $\sigma$ be the corresponding MD of this similarity match.
We will also add repair literals $V_s(x,v_x)$ and $V_s(y,v_y)$ and restriction equality literal $v_x = v_y$
to the clause according to $sigma$.
\vspace{-3pt}
\begin{example}
	\label{example-bottomclause}
	Given the relevant tuples found in Example~\ref{example-relevant-tuples}, DLearn creates the following bottom-clause:
	\begin{align*}
	\mathit{high}&\mathit{Grossing}(x) \leftarrow \mathit{movies}(y, t, z), x\approx t,\mathit{V_{x\approx t}}(x,v_x), \\
		&  \mathit{V_{x\approx t}}(t,v_t), v_x = v_t, \mathit{mov2genres}(y,\mathit{`comedy\text{'}}),\\ 
	& \mathit{mov2countries}(y,u),\mathit{countries}(u,\mathit{`USA\text{'}}),\\ 
	& \mathit{englishMovies}(y), \mathit{mov2releasedate}(y,\mathit{`August\text{'}}, w).
	\end{align*}
\end{example}
\noindent
Then, we scan $C_e$ to find violations of each CFD in $\Phi$ and add their corresponding repair literals.
Since each CFD is defined over a single table, we first group literals in $C_e$ based on their relation symbols.
For each group with the relation symbol $R$ and CFD $\phi$ on $R$, our algorithm scans the literals in the group,
finds every pair of literals that violate $\phi$, and adds the repair and restriction literals to the group.
We add the repair and restriction literals corresponding to the repair operations explained in Section~\ref{sec:CFD} to the group and consequently $C_e$ as illustrated in Example~\ref{ex:CFDRepair}. 
The added repair literals will not induce any new violation of $\phi$ in the clause \cite{Cong:Gao:2007,10.14778/1952376.1952378,DBLP:series/synthesis/2012Fan}.
However, repairing a violation of $\phi$ may induce violations for anther CFD $\phi'$ over $R$ \cite{DBLP:series/synthesis/2012Fan}. 
For example, consider CFD $\phi_3:$ $(A \rightarrow B, -\mid\mid-)$ and $\phi_4:$ $(B \rightarrow C, -\mid\mid-)$ 
on relation $R(A,B,C)$. Given literals $l_1: R(x_1,y_1,z_1)$ and $l_2: R(x_1,y_1,z_2)$ that violate 
$\phi_4$, our method adds repair literals that replaces $y_1$ in $l_1$ with a fresh variable. 
This repair literal produces a repaired clause that violates $\phi_3$.
Thus, the algorithm repeatedly scans the clause and adds repair and restriction literals to it for all
CFDs until there is a repair for every violation of CFDs both the ones in the original clause and the ones
induced by the repair literals added in the preceding iterations.
The repaired literals for the violations induced by other repair literals will use the replacement variables from 
the violating repair literals as their arguments and conditions.

It may take a long time to generate the clause that contains all repair literals for all original and 
induced violations of every CFD in a large input bottom-clause. Hence, we reduce the number of repair literals 
per CFD violation by adding only the repair literals for the variables of the right-hand side attribute of the CFD
that use current variables in the violation. For instance, in Example~\ref{ex:CFDRepair}, 
the algorithm does {\it not} introduce literals $\mathit{V_c}(z,v_{z})$, $\mathit{V_c}(t,v_{t})$, and
$v_{z} = v_{t}$ and only uses literals $\mathit{V_c}(z,t)$ and $\mathit{V_c}(t,z)$ to repair the clause
in Example~\ref{ex:CFDRepair}. The repair literals for the 
variables corresponding to the left-hand side of the CFD will be used as explained before.
This approach follows the popular minimal repair semantic for repairing CFDs \cite{10.1145/1066157.1066175,10.1007/3-540-45653-8_39,10.5555/645505.656435,Cong:Gao:2007,10.14778/1952376.1952378,10.1145/1514894.1514901,DBLP:series/synthesis/2012Fan} as it repairs the violation 
by modifying fewer variable than the repair literals that introduce fresh variables to the both literals of the
violation, e.g., one versus two modifications induced by $\mathit{V_c}(z,v_{z})$, $\mathit{V_c}(t,v_{t})$ 
in the repair of the clause in Example~\ref{ex:CFDRepair}.
Since each CFD is defined over a single relation, the aforementioned steps are applied separately to literals of each relation, which are usually a considerably smaller set than the set of all literals in the bottom-clause.
Moreover, the bottom-clause is significantly smaller than the size of the whole database.
Thus, the bottom-clause construction algorithm takes significantly less time than producing the repairs of the underlying database.

Current bottom-clause constructions methods do not induce inequality $neq$ literal between distinct constants
in the database and their corresponding variables and represent their relationship by replacing them with distinct 
variables. If the inequality literal is used, the eventual generalization of the 
bottom-clause may be too strict and lead to a learned clause that does not cover sufficiently many positive examples \cite{progolem,progol,DeRaedt:2010:LRL:1952055,castor:SIGMOD17}. 
For example, let $T(x):-$ $R(x,y), S(x,z),$ $y \neq z.$ be a bottom-clause. This clause will not cover
positive examples such as $T(a)$ for which we have $T(a):-$ $R(a,b), S(a,b)$. 
However, the bottom-clause $T(x):-$ $R(x,y), S(x,z)$ has more generalization power and may cover both
positive examples such as $T(a)$ and $T(c)$ such that $T(c):-$ $R(c,b), S(c,d)$. 
As the goal of our algorithm is to simulate relational learning over repaired instances of the original database, 
we follow the same approach and remove the inequality literals between variables. As our repair operations
ensure that the arguments of inequality literals are distinct variables, our method exactly emulates bottom-clause construction in relational learning. 
The inequalities remain in the condition $c$ of each repair literal $V_c$ and will return true if 
the variables are distinct and there is no equality literal between them in the body of the clause
and false otherwise. They are not used in learning and are used to apply repair literals
on the final clause.

\vspace{-4pt}
\begin{proposition}
\label{prop:BC1}
The bottom-clause construction algorithm for positive example $e$ and database $I$ with 
MDs $\Sigma$ and CFD $\Phi$ terminates. Also, the bottom-clause $C_e$ created from $I_e$ 
using the algorithm covers $e$.
\end{proposition}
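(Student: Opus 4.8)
\emph{Plan.} The statement has two parts, and I would prove them in turn: first that the bottom-clause construction terminates, then that the clause $C_e$ it produces covers $e$ in the sense of Definition~\ref{def:coverExample}.

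\emph{Termination.} The information-gathering loop of Algorithm~\ref{algorithm:bottom-clause-original-mds} runs exactly $d$ times, and its body ranges over the finitely many relations of $\mS$ and the finitely many attributes of each. Every step adds to $I_e$ only tuples returned by an exact selection $\sigma_{A\in M}(R)$ or a similarity probe $\psi_{B_j\approx M}(R)$, hence only tuples already present in the finite instance $I$; consequently $M$ never leaves the finite set of constants occurring in $I$ together with those of $e$, so $I_e$ is finite and the loop halts. The clause-creation step emits one body literal per tuple of $I_e$ plus finitely many equality, similarity, and MD-repair literals, so it halts too. The delicate point is the CFD phase, which repeatedly rescans the clause adding repair and restriction literals for violations of members of $\Phi$, including violations induced by previously added repair literals. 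Since each CFD ranges over a single relation, the process applied to the group of literals carrying a fixed relation symbol $R$ is the classical CFD-repair procedure run on a finite relation (that group, with variables playing the role of values); it terminates because $\Phi$ restricted to $R$ is consistent (Section~\ref{sec:CFD}) and because the minimal-repair restriction described just before the proposition reuses current variables on right-hand sides, preventing an unbounded cascade of fresh variables. As the literal set is finite and the finitely many relation symbols are treated independently, the CFD phase, and hence the whole algorithm, halts.

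\emph{Coverage.} By Definition~\ref{def:coverExample} it suffices to show that every repaired clause $C'$ of $C_e$ covers $e$ in some repair of $I$. Fix $C'$, determined by a particular order of applying the repair literals of $C_e$, and associate to that order the repair $I'$ of $I$ obtained by performing the mirror-image data operations: for an MD-repair literal arising from a similarity match witnessed by $t_1,t_2 \in I_e$, enforce the corresponding MD on $t_1,t_2$ in the sense of Definition~\ref{def:MDEnforce}; for a CFD-repair literal, apply the corresponding attribute-value modification to the witnessing tuples; then extend to a full stable instance satisfying $\Phi$ by applying the remaining MDs and CFD repairs in any terminating order, which exists by the constructions recalled in Sections~\ref{sec:MD} and~\ref{sec:CFD} together with the consistency of $\Phi$. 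The key observation is that $C'$ is, up to renaming of variables, the ordinary (heterogeneity-free) bottom-clause of $e$ relative to $I'$: its schema literals are the images under the constant-to-variable map of the tuples of $I'$ reachable from $e$, the induced equality literals record which of those variables stand for the same value of $I'$, and each restriction $=$, $\neq$, or $\approx$ literal holds in $I'$ precisely because the enforcement step that built $I'$ made those values equal, kept them distinct, or was licensed by a genuine similarity match -- i.e., exactly the condition $c$ that was true when the corresponding repair literal was applied. Hence the classical fact that a bottom-clause covers its seed example relative to the database from which it was built \cite{progolem,progol,DeRaedt:2010:LRL:1952055,castor:SIGMOD17} yields $I' \wedge C' \models e$, where $e$ denotes the example tuple as it appears in $I'$ (with its value possibly unified, exactly as in Example~\ref{example:CoverageHetero}). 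Since $C'$ was arbitrary, $C_e$ covers $e$.

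\emph{Main obstacle.} Two points will need the most care. First, termination of the CFD phase in the presence of violations induced by earlier repairs: I would reduce this to the known termination of CFD repair on a finite relation, but must verify that the minimal-repair bookkeeping here (splitting repeated variables, reusing current variables on right-hand sides) genuinely blocks an infinite cascade and that treating each relation symbol independently is legitimate. Second, the emulation claim underlying coverage: one must check that the repair $I'$ built to mirror a chosen repaired clause $C'$ is indeed a repair of $I$ and that the constant-to-variable map is a homomorphism witnessing $I' \wedge C' \models e$ -- in particular that the further MD/CFD enforcement needed to reach stability does not disturb the values referenced by the body literals of $C'$, which I would handle by scheduling the $C'$-relevant enforcements first and arguing those tuples persist.
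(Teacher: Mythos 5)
Your overall strategy mirrors the paper's: termination from finiteness of the data-gathering and clause-creation steps plus termination of the repair-literal phase, and coverage by associating to each repaired clause of $C_e$ a repair of (the relevant part of) $I$ in which that clause covers $e$, then invoking Definition~\ref{def:coverExample}. The coverage half is fine; it is in fact a more careful version of the paper's own argument, which simply observes that applying a set of repair literals to $C_e$ corresponds to applying the matching MDs of $\Sigma$ and CFD repairs of $\Phi$ to $I_e$, producing a repair in which the resulting repaired clause covers $e$.

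The gap is in the termination of the CFD phase. You reduce it to ``the classical CFD-repair procedure run on a finite relation'' and then cite consistency of $\Phi$ and the minimal-repair restriction, but this reduction is not accurate and the crucial step is flagged rather than proved: the algorithm does not materialize a single repair, it symbolically adds repair literals (with replacement variables) for \emph{every} violation, including violations induced by previously added repair literals, so termination of a concrete repair procedure does not transfer directly to this compact, all-repairs representation. What must be shown --- and what the paper's proof actually establishes --- is that the algorithm never adds a repair literal that reverts the effect of a previously added one, so the scan-and-add process cannot oscillate and stabilizes on the finite literal set. The paper proves this by a case analysis on chains of CFDs over the same relation, e.g.\ $\phi: AC \rightarrow B$ with pattern $t_p$ and $\phi': BD \rightarrow A$ with pattern $t'_p$: unifying the $B$-variables of two literals $l_1,l_2$ to repair a violation of $\phi$ cannot create a violation of $\phi'$ on $l_1,l_2$ (their $A$-variables are already equal by the violation of $\phi$, and the $D$-variables either differ, giving no violation, or agree, giving satisfaction), unless the $A$-constants of $t_p$ and $t'_p$ are distinct, which would contradict the assumed consistency of $\Phi$; longer chains are handled similarly. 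Your ingredients (consistency, minimality, finiteness) are the right ones, but without this no-reversion argument the termination of the induced-violation cascade is asserted, not proved.
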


\subsection{Generalization}
\label{sec:generalization}
After creating the bottom-clause $C_e$ for example $e$, DLearn generalizes $C_e$ to produce a 
clause that is {\it more general} than $C_e$. 
Clause $C$ is more general than clause $D$ if and only if $C$ covers at least 
all positive examples covered by $D$. A more general clause than $C_e$ may cover
more positive examples than $C_e$.
DLearn iteratively applies the generalization to find a clause that covers
the most positive and fewest negative examples as possible.
It extends the algorithm in ProGolem \cite{progolem} to produce generalizations of $C_e$ 
in each step efficiently. This algorithm is based on the concept of $\theta$-subsumption, 
which is widely used in relational learning~\cite{DeRaedt:2010:LRL:1952055,progol,progolem}. 
We first review the concept of $\theta$-subsumption for repaired clauses \cite{DeRaedt:2010:LRL:1952055,progolem}, then, we explain how to extend this concept and its generalization methods for non-stable clauses.

Repaired clause $C$ {\it $\theta$-subsumes} repaired clause $D$, denoted by $C \subseteq_{\theta} D$, 
iff there is some substitution $\theta$ such that $C \theta \subseteq D$ \cite{DeRaedt:2010:LRL:1952055,AliceBook},
i.e., the result of applying substitution $\theta$ to literals in $C$ creates a set of literals that 
is a subset of or equal to the set of literals in $D$.
For example, clause $C_1:$ $\mathit{highGrossing}(x) \leftarrow  \mathit{movies}(x, y, z)$
$\theta$-subsumes $C_2:$
$\mathit{highGrossing}(a)$ $\leftarrow$ $\mathit{movies}(a,$ $b,$ $c),$ $\mathit{mov2genres}(b,\mathit{`comedy\text{'}})$
as for substitution $\theta$ $ =\{ x/a,$ $y/b,$ $z/c\}$, we have $C_1 \theta \subseteq C_2$.
We call each literal $L_D$ in $D$ where there is a literal $L_C$ in $C$ such that 
$L_C\theta = L_D$ a {\it mapped} literal under $\theta$.
For Horn definitions, 
we have $C$ $\theta$-subsumes $D$ iff $C \models G$, i.e., $C$ logically entails $D$ \cite{AliceBook,DeRaedt:2010:LRL:1952055}.
Thus, $\theta$-subsumption is sound for generalization. 
If clauses $C$ and $D$ contain equality and similarity literals, the subsumption checking requires additional
testings, which can be done efficiently \cite{AliceBook,DeRaedt:2010:LRL:1952055,10.1007/978-3-540-24741-8_27}.
Roughly speaking, current learning algorithms generalize a clause $D$ efficiently 
by eliminating some of its literals which produces a clause that $\theta$-subsumes $D$.
We define $\theta$-subsumption for clauses with repair literals using its definition for the repaired ones.
Given a clause $D$, a repair literal $V_c(x,v_x)$ in $D$ is {\it connected to} a non-repair literal $L$ in $D$
iff $x$ or $v_x$ appear in $L$ or in the arguments of a repair literal connected to $L$.
\begin{definition}
\label{def:subsumption}
Let $V(C)$ denote the set of all repair literals in 
$C$ $\theta$-subsumes $D$, denoted by $C \subseteq_{\theta} D$, iff 
\begin{itemize}
\item there is some substitution $\theta$ such that $C \theta \subseteq D$ where repair literals are 
treated as normal ones and 
\item every repair literal connected to a mapped literal in $D$ is also a mapped literal under $\theta$.
\end{itemize}
\end{definition}
\noindent
Definition~\ref{def:subsumption} ensures that each repair literal that modifies a mapped one
in $D$ has a corresponding repair literal in $C$. Intuitively, this guarantees that there is subsumption mapping
between corresponding repaired versions of $C$ and $D$.
The next step is to examine whether $\theta$-subsumption provides a sound bases for
generalization of clauses with repair literals. We first define logical entailment following
the semantics of Definition~\ref{def:coverExample}. 
\begin{definition}
\label{def:entialment}
We have $C \models D$ if and only if there is an onto relation $f$ from the set of repairs of
$C$ to the one of $D$ such that for each repaired clause of $C$, $C_r$, and each $D_r \in f(C_r)$, 
we have $C_r \models D_r$. 
\end{definition}

According to Definitions~\ref{def:entialment}, if one wants to follow the generalization method used in the current learning algorithm to check whether $C$ generalizes $D$, one has enumerate and check $\theta$-subsumption of almost every pair of repaired clauses of $C$ and $D$ in the worst case. 
Since both clauses normally contain many literals and $\theta$-subsumption is NP-hard \cite{AliceBook}, 
this method is not efficient. 
The problem is more complex if one wants to generalize a given clause $D$. It may have to
generate all repaired clauses of $D$ and generalize each of them separately. 
It is not clear how to unify and represent all produced repaired clauses in a 
single non-repaired one. It quickly explodes the hypothesis
space if we cannot represent them in a single clause as the algorithm may have to keep track and 
generalize of almost as many clauses as repairs of the underlying database.
Also, because the learning algorithm performs numerous generalizations and coverage tests, 
learning a definition may take an extremely long time. 
The following theorem establishes that $\theta$-subsumption is sound for generalization of clauses with repair literals.
\begin{theorem}
\label{prop:coveragePositive}
Given clauses $C$ and $D$, if $C$ $\theta$-subsumes $D$, we have $C \models D$. 
\end{theorem}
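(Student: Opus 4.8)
The plan is to reduce the statement to the classical fact recalled just before the theorem---that for Horn clauses (extended with equality and similarity literals) $\theta$-subsumption implies logical entailment---and then to lift this from repaired clauses to clauses with repair literals through Definition~\ref{def:entialment}. First I would fix a substitution $\theta$ witnessing $C \subseteq_{\theta} D$; by Definition~\ref{def:subsumption} this means $C\theta \subseteq D$ when repair literals are read as ordinary literals, and in addition every repair literal of $D$ that is connected to a mapped non-repair literal is itself a mapped literal under $\theta$. The key structural observation is that a repaired clause is completely determined by an \emph{ordering} of the repair literals: processing them in that order, each one is either applied (its condition holds, so its variable is substituted everywhere, including in the remaining repair-literal conditions) or merely deleted, and finally the dangling restriction and induced-equality literals are dropped. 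Thus the set of repairs of $C$ (and of $D$) is exactly the set of clauses obtainable this way.

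The heart of the argument is a simulation lemma. Given any ordering $o$ of the repair literals of $D$, I would build an ordering $o'$ of the repair literals of $C$ by sorting them according to the position of their $\theta$-images in $o$ (breaking ties arbitrarily when several $C$-literals share one image), and then prove, by induction on the number of processed repair literals, that the current clauses $C^{(k)}$ and $D^{(k)}$ satisfy $C^{(k)} \subseteq_{\theta^{(k)}} D^{(k)}$, where $\theta^{(k)}$ extends $\theta$ on the replacement variables introduced so far---sending each fresh $v_x$ of $C$ to the corresponding replacement variable of $D$ when $D$ also applied the matching repair, and to $\theta^{(k-1)}(x)$ otherwise. The inductive step is a case analysis on whether the condition of the current $C$-repair literal holds and whether its $\theta$-image's condition holds in $D$; in each case one checks that $\theta^{(k)}$ still maps every mapped literal (and condition) of $D^{(k)}$ back into $C^{(k)}$, using that $C\theta\subseteq D$ so the $D$-body only \emph{adds} literals---hence an equality required by a condition that is present on the $C$ side is present on the $D$ side, while an inequality condition (which evaluates to false precisely when an equality literal is present) behaves monotonically. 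Symmetrically, from an ordering of the repair literals of $C$ one obtains a matching ordering of those of $D$. Taking $f$ to be the union of the graphs of these two maps yields a relation from the repairs of $C$ to the repairs of $D$ that is onto (indeed total); by the simulation lemma every pair $(C_r, D_r) \in f$ satisfies $C_r \subseteq_{\theta'} D_r$ for the extended substitution, hence $C_r \models D_r$ by the soundness of $\theta$-subsumption for Horn clauses, which is exactly the condition of Definition~\ref{def:entialment} establishing $C \models D$.

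I expect the main obstacle to be the inductive step of the simulation lemma: one must show the repair process on $C$ can always be kept in step with the one on $D$ even when a repair fires on one side but not the other, by extending the subsumption substitution on the fresh replacement variable (e.g.\ defining $\theta'(v_x) = \theta(x)$ when $C$ applies $V_c(x,v_x)$ but $D$ does not apply its image), and one must argue that the ``extra'' repair literals of $D$---those not in the image of $\theta$---are harmless. For the latter I would use the connectedness clause of Definition~\ref{def:subsumption}: an unmapped repair literal of $D$ that shared a variable with any mapped literal, or with the arguments of a mapped repair literal, would itself be connected to a mapped non-repair literal and hence would be mapped, a contradiction; so such literals only rewrite variables occurring solely in unmapped literals and therefore disturb neither the mapped literals nor the mapped repair literals' conditions. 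A secondary subtlety is the final cleanup: one has to verify that a restriction or induced-equality literal is discarded on the $D$ side whenever its $\theta$-preimage is discarded on the $C$ side, so that $C_r\theta' \subseteq D_r$ is not broken at the last moment; this again follows from the same connectedness bookkeeping, since a variable that survives in a schema literal of $D$ must be the $\theta$-image of a variable surviving in a schema literal of $C$.
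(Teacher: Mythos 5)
Your proposal is correct and takes essentially the same route as the paper's proof: pair each mapped repair literal of $D$ with its $\theta$-preimage in $C$, show the repair applications stay in lockstep by extending the substitution to the fresh replacement variables, use the connectedness requirement of Definition~\ref{def:subsumption} to argue the unmapped repair literals of $D$ are harmless, and conclude via soundness of $\theta$-subsumption for repaired clauses and Definition~\ref{def:entialment}. Your write-up is in fact more explicit than the paper's on two points---the onto requirement of Definition~\ref{def:entialment} (handled via the ordering-to-ordering map) and the case where a repair condition fires on one side only (the paper simply asserts that no repair literal with a false condition occurs)---but these are refinements of the same simulation argument, not a different approach.
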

\vspace{-4pt}
\noindent
To generalize $C_e$, DLearn randomly picks a subset $E^{+s} \subseteq E^+$ of positive examples.
For each example $e'$ in $E^{+s}$, DLearn generalizes $C_e$ to produce a candidate clause $C'$, 
which is more general than $C_e$ and covers $e'$.
Given clause $C_e$ and positive example $e' \in E^{+s}$, DLearn produces a clause that $\theta$-subsumes
$C_e$ and covers $e'$ by removing the {\it blocking literals}.
It first creates a total order between the relation symbols and the symbols of repair literals in 
the schema of the underlying database, e.g., using a lexicographical order and adding the condition
and argument variables to the symbol of the repair literals.
Thus, it establishes an order in each clause in the hypothesis space. 
Let $C_e = T \leftarrow L_1, \cdots, L_n$ be the bottom-clause.
The literal with relation symbol 
$L_i$ is a {\it blocking literal} if and only if $i$ is the least value such that 
for all substitutions $\theta$ where $e' = T\theta$, 
$(T \leftarrow L_1, \cdots, L_i)\theta$ does not cover $e'$~\cite{progolem}.
\vspace{-3pt}
\begin{example}
	Consider the bottom-clause $C_e$ in Example~\ref{example-bottomclause} and positive example {\it $e'=$ highGrossing(`Zoolander\text{'})}. To generalize $C_e$ to cover $e'$, 
	DLearn drops the literal \\
	{\it mov2releasedates}$(y, \mathit{\it `August\text{'}}, u)$ because the movie {\it Zoolander} was not released in August.
\end{example}
\vspace{-4pt}
\noindent
DLearn removes all blocking literals in $C_e$ to produce the generalized clause $C'$.
DLearn also ensures that all literals in the resulting clause are head-connected.
For example, if a non-repair literal $L$ is dropped so as the repair literals whose only 
connection to the head literal is through $L$.
Since $C'$ is generated by dropping literals, it $\theta$-subsumes $C_e$. 
It also covers $e'$ by construction. DLearn generates one clause per example in $E^{+s}$.
From the set of generalized clauses, DLearn selects the highest scoring candidate clause.
The score of a clause is the number of positive minus the number of negative examples covered by the clause.
DLearn then repeats this with the selected clause until its score is not improved.

During each generalization step, the algorithm should ensure that the generalization is minimal with respect to 
$\theta$-subsumption, i.e., 
there is {\it not} any other clause $G$ such that $G$ $\theta$-subsumes $C_e$ and 
$C'$ $\theta$-subsumes $G$ \cite{progolem}. Otherwise, the algorithm may miss some effective clauses 
and produce a clause that is overly general and may cover too many negative examples.
The following proposition states that DLearn produces a minimal generalization in each step.
\vspace{-4pt}
\begin{proposition}
\label{prop:minimal}
Let $C$ be a head-connected and ordered clause generated from a bottom-clause using DLearn
generalization algorithm. Let clause $D$ be the generalization of $C$ produced 
in a single generalization step by the algorithm. Given the clause $F$ that $\theta$-subsumes $C$,
if $D$ $\theta$-subsumes $F$, then $D$ and $F$ are equivalent. 
\end{proposition}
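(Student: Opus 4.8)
The plan is to follow the proof that ProGolem's asymmetric relative minimal generalization is minimal under $\theta$-subsumption~\cite{progolem}, and to extend it to clauses that carry repair and restriction literals. Recall that $D$ is obtained from the ordered, head-connected clause $C = T \leftarrow L_1,\dots,L_n$ by deleting, for the fixed positive example $e'$, every blocking literal together with every literal (repair or not) that is left head-disconnected once the blocking literals are removed; by the construction of Section~\ref{sec:generalization}, $D$ covers $e'$ and $D \subseteq_{\theta} C$. Since $D \subseteq_{\theta} F$ is assumed, it suffices to exhibit a substitution witnessing $F \subseteq_{\theta} D$ in the full sense of Definition~\ref{def:subsumption}: together with $D \subseteq_{\theta} F$ this makes $D$ and $F$ $\theta$-subsumption equivalent (so $D$ is a minimal generalization in the step), and by Theorem~\ref{prop:coveragePositive} it also gives $D \models F$ and $F \models D$. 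Throughout I use that $F$ covers $e'$; this is the regime in which the claim is meaningful (the algorithm only compares generalizations that cover $e'$) and it is needed for the conclusion.

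First I would fix a substitution $\sigma$ with $F\sigma \subseteq C$ as in Definition~\ref{def:subsumption}, and show that the image $F\sigma$ contains none of the literals deleted from $C$ in forming $D$. For this I adapt the inductive argument of~\cite{progolem} on the substitutions $\theta_0 \subseteq \theta_1 \subseteq \cdots$ built greedily while scanning $L_1,\dots,L_n$. For a deleted non-repair literal $L_i$, which is blocking for the greedy substitution at step $i$, its membership in $F\sigma$ together with the fact that $F$ covers $e'$ and the canonical order placed on the literals of every clause in Section~\ref{sec:generalization} would yield a substitution witnessing that the prefix ending in $L_i$ still covers $e'$, contradicting blockedness; hence $L_i \notin F\sigma$. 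For a deleted repair literal $V_c(x,v_x)$ --- deleted only because every head-path connecting it in $C$ runs through a deleted blocking literal --- I trace its ``connected to'' chain (Definition~\ref{def:subsumption}) back to such a blocking literal; if $V_c(x,v_x) \in F\sigma$, the connectedness clause of the witness of $F \subseteq_{\theta} C$ forces that blocking literal into $F\sigma$ as well, contradicting the previous case. Thus $F\sigma$ avoids every deleted literal, i.e.\ $F\sigma \subseteq D$ as sets of literals.

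It then remains to verify that $\sigma$ witnesses $F \subseteq_{\theta} D$ in the sense of Definition~\ref{def:subsumption}, i.e.\ that every repair literal of $D$ connected (in $D$) to a literal of $D$ lying in $F\sigma$ is itself in $F\sigma$. Since $D$ retains exactly the literals of $C$ that survive deletion, the ``connected to'' relation of $D$ is the restriction of that of $C$; so any repair literal $V$ of $D$ connected in $D$ to some $\ell \in F\sigma$ is connected in $C$ to $\ell$ as well, and since $\ell$ is a mapped literal of $C$ under $\sigma$, the connectedness clause already guaranteed by $F \subseteq_{\theta} C$ gives $V \in F\sigma$ (and $V \in D$). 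Hence $\sigma$ witnesses $F \subseteq_{\theta} D$, and with the hypothesis $D \subseteq_{\theta} F$ we conclude that $D$ and $F$ are equivalent.

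I expect the main obstacle to be making the first step rigorous for the extended notion of blocking literal: DLearn's clauses carry repair literals $V_c$ whose conditions $c$ are conjunctions of $=$, $\neq$, $\approx$ restriction literals, and the total order defining blocking literals now also ranks the repair symbols with their conditions and argument variables; I must check that deleting a blocking non-repair literal together with the repair and restriction literals hanging off it is mirrored exactly by what a between-clause $F$ can map onto, so that no ``hidden'' generalization --- for instance one that drops a restriction literal while keeping the associated repair literal, or conversely --- slips strictly between $D$ and $C$. A secondary subtlety is that induced-equality and restriction literals that become variable-isolated are removed during clause construction (Section~\ref{sec:HetroDef}); the inclusion $F\sigma \subseteq D$ should therefore be read modulo that clean-up, and I would verify that these removals likewise never produce a clause strictly between $D$ and $C$.
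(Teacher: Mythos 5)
Your overall strategy---inline the minimality proof of ProGolem's ARMG and push the repair literals through via the connectedness clause of Definition~\ref{def:subsumption}---differs from the paper, which simply observes that dropping a literal together with its repair literals corresponds to dropping that literal's repairs in every repaired version of the clause, and then defers, via Theorem~\ref{prop:coveragePositive}, to Theorem~3 of \cite{progolem}. The difference matters because your inlined argument breaks at its first and central step. From $L_i\in F\sigma$ and ``$F$ covers $e'$'' you cannot conclude that the prefix $T\leftarrow L_1,\dots,L_i$ of $C$ covers $e'$: coverage of $e'$ by $F$ only yields a grounding of $F$'s own literals under $F$'s variable identifications, whereas blockedness is a statement about $C$'s literals under $C$'s identifications, and $\sigma$ may collapse several variables of $F$ onto one variable of $C$, decoupling in $F$ exactly the join that makes $L_i$ blocking in $C$. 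Concretely, take $I=\{P(a,b),P(a,c),R(b),Q(c)\}$, $e'=T(a)$, and $C:\ T(x)\leftarrow P(x,y),R(y),Q(y)$ (no repair literals, so Definition~\ref{def:subsumption} is ordinary $\theta$-subsumption). Then $Q(y)$ is the blocking literal and $D:\ T(x)\leftarrow P(x,y),R(y)$. The head-connected clause $F:\ T(x)\leftarrow P(x,y),R(y),P(x,w),Q(w)$ $\theta$-subsumes $C$ (via $w\mapsto y$), is $\theta$-subsumed by $D$ (identity), and covers $e'$ (take $y=b$, $w=c$), yet its image under $\sigma$ contains the blocking literal and $F$ is not equivalent to $D$ (no substitution can map $Q(w)$ into $D$ at all). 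So not only does your key inference fail; the claim under exactly the hypotheses you use ($F\subseteq_{\theta}C$, $D\subseteq_{\theta}F$, $F$ covers $e'$, $F$ head-connected) is false, and the failure already occurs in the classical core, before any repair-literal machinery enters.

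What is missing is the restriction under which ProGolem's Theorem~3 holds and which the paper implicitly inherits by citing it: $F$ must range over the clauses the generalization step can actually produce, i.e., clauses assembled from $C$'s own ordered literals (subsets of $C$, up to the induced removal of head-disconnected literals and their attached repair and restriction literals), not arbitrary clauses sitting between $D$ and $C$ in the $\theta$-subsumption lattice. Your added hypothesis that $F$ covers $e'$ is a sensible observation but is not enough, as the example shows. The obstacles you yourself flag (the ordering of repair symbols, the clean-up of variable-isolated restriction literals) are secondary; with the lattice restriction in place, your second and third steps---tracing deleted repair literals along their connectedness chains and verifying the second bullet of Definition~\ref{def:subsumption} for the witness of $F\subseteq_{\theta}D$---are fine in spirit and constitute the genuinely new content beyond \cite{progolem}. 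But as written, your proof neither states nor uses that restriction, and the blocking-literal argument cannot be repaired without it.
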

\subsection{Efficient Coverage Testing}
\label{sec:coverage}
DLearn checks whether a candidate clause covers training examples in order to find blocking literals in a clause.
It also computes the score of a clause by computing the number of training examples covered by the clause.
Coverage tests dominate the time for learning~\cite{DeRaedt:2010:LRL:1952055}.
One approach to perform a coverage test is to transform the clause into a SQL query and evaluate it over the input database to determine the training examples covered by the clause. However, since bottom-clauses over large databases normally have many literals, e.g., hundreds of them, the SQL query will involve long joins, making the evaluation extremely slow. Furthermore, it is challenging to evaluate clauses using this approach over heterogeneous data \cite{Bertossi2011DataCA}. 
It is also not clear how to evaluate clauses with repair literals.

We use the concept of $\theta$-subsumption for clauses with repair literals and the 
result of Theorem~\ref{prop:coveragePositive} to compute coverage efficiently.
To evaluate whether $C$ covers a positive example $e$ over database $I$, 
we first build a bottom-clause $G_e$ for $e$ in $I$ called a {\it ground bottom-clause}.  
Then, we check whether $C \wedge I \models e$ using $\theta$-subsumption.
We first check whether $C \subseteq_{\theta} G_e$. Based on Theorem~\ref{prop:coveragePositive}, if we find a substitution $\theta$ for $C$ such that $C \theta \subseteq G_{e}$, and $C$ logically entails $G_e$, thus, 
$C$ covers $e$. 
However, if we cannot find such a substitution, it is not clear whether $C$ logically entails $G_e$ 
as Theorem~\ref{prop:coveragePositive} does not provide the necessity of  
$\theta$-subsumption for logical entailment. Fortunately, this is true if we have only repair literals 
for MDs in $C$ and $G_e$. 
\vspace{-5pt}
\begin{theorem}
\label{prop:coveragePositiveInverse}
Given clauses $C$ and $D$ such that every repair literal in $C$ and $D$ corresponds to an MD, if 
$C \models D$, $C$ $\theta$-subsumes $D$.
\end{theorem}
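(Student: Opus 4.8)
The plan is to prove the contrapositive in spirit: assuming $C \models D$ with only MD-repair literals present, I will construct an explicit $\theta$-subsumption mapping from $C$ into $D$. The key observation is that, for MDs only, each clause $C$ has a well-controlled set of repaired clauses: each MD-repair literal $V_{x \approx y}(x, v_x)$ together with its partner $V_{x \approx y}(y, v_y)$ and the restriction literal $v_x = v_y$ is always applied (the condition $x \approx y$ is present as a similarity literal by construction), so the repaired clauses of $C$ differ from one another only in the \emph{order} in which the unification replacements are threaded through shared variables. Thus, unlike the CFD case, the repairs of an MD-only clause are "essentially unique": they all collapse the same equivalence classes of variables induced by the chain of $\approx$/$=$ literals, and they are pairwise equivalent up to variable renaming. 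I would first prove this structural lemma precisely — that for an MD-only clause $C$, all repaired clauses of $C$ are logically equivalent — since it is what makes the converse direction tractable.

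Given that lemma, the argument proceeds as follows. Pick any repaired clause $C_r$ of $C$ and any repaired clause $D_r$ of $D$. By Definition~\ref{def:entialment} and the structural lemma, $C \models D$ forces $C_r \models D_r$ for this pair (since the onto relation $f$ must relate $C_r$ to \emph{some} $D_r'$, and all $D_r'$ are equivalent, and all $C_r$ are equivalent). Now $C_r$ and $D_r$ are ordinary repaired clauses — Horn clauses with $\approx$ and $=$ literals — so by the classical subsumption theorem for such clauses (cited in the excerpt: $C_r \models D_r$ iff $C_r \subseteq_\theta D_r$, with the additional efficient testing for $\approx$/$=$ literals), there is a substitution $\theta_r$ with $C_r \theta_r \subseteq D_r$. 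The remaining task is to lift $\theta_r$ back to a substitution $\theta$ witnessing $C \subseteq_\theta D$ in the sense of Definition~\ref{def:subsumption}: I would define $\theta$ on the original (pre-repair) variables of $C$ by composing the repair-substitution of $C$ that produces $C_r$, the map $\theta_r$, and the inverse of the repair-substitution of $D$ that produces $D_r$ — using that the latter is injective on the variables that actually survive in $D_r$. Then I must check the second bullet of Definition~\ref{def:subsumption}: every repair literal of $D$ connected to a mapped literal is itself mapped. This holds because a repair literal $V_{x \approx y}$ in $D$ is connected to a body literal $L$ exactly when $x$ or $y$ (or a downstream replacement variable) occurs in $L$; if $L$ is mapped from $L_C \in C$ under $\theta$, then by construction $L_C$ contains the $\theta$-preimage of that variable, and since $C$ is MD-only and well-formed, that variable came from a similarity literal $x \approx y$ in $C$, whose companion repair literals are present in $C$ and map onto those of $D$.

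The main obstacle I anticipate is the bookkeeping in the lifting step: the repair substitutions for $C$ and $D$ are not injective globally (several original variables get merged into one replacement variable), so I cannot simply invert them. The fix is to restrict attention to a chosen system of representatives — one surviving variable per $\approx/=$-equivalence class — and to argue that $\theta_r$ can be assumed to respect these classes (any subsumption substitution between repaired clauses can be post-composed with the idempotent projection onto representatives, because $D_r$ already identifies the class members). A secondary subtlety is making sure the order-of-application ambiguity genuinely does not matter: I would discharge this by noting that two repaired clauses of $C$ obtained by different application orders have identical variable-merge partitions and identical sets of surviving (schema) literals up to renaming, hence are $\theta$-subsumption-equivalent in both directions, which is exactly the hypothesis needed to invoke the structural lemma. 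Once these two points are nailed down, the theorem follows by combining the structural lemma, the classical subsumption theorem applied to the repaired clauses, and the explicit lift.
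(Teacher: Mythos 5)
Your argument hinges on the structural lemma that an MD-only clause has essentially one repaired clause --- that every MD-repair literal is always applied and that all application orders collapse the same equivalence classes of variables, so all repaired clauses are logically equivalent. That lemma is false, and the paper's own Example~\ref{example:multipleStableClauses} is a counterexample: the clause $H$ there contains only MD repair literals, yet it has two repaired clauses, $H'_1: T(v_x) \leftarrow R(v_y), v_x = v_y, S(z)$ and $H'_2: T(u_x) \leftarrow R(y), S(v_z), u_x = v_z$, which are not equivalent (one joins the head variable with $R$, the other with $S$). The reason is the fresh-value semantics of MD enforcement: applying $V_{x\approx y}(x,v_x)$ replaces $x$ by a fresh variable $v_x$ everywhere, including inside the conditions of the remaining repair literals, and since matching is not similarity-preserving the condition $x \approx z$ of the other repair no longer holds, so that repair literal is discarded rather than applied. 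Hence MD-repair literals are not ``always applied,'' different orders yield genuinely different merge partitions, and your claim that $C \models D$ forces $C_r \models D_r$ for an \emph{arbitrary} pair of repaired clauses collapses; Definition~\ref{def:entialment} only guarantees, for each $C_r$, \emph{some} $D_r$ with $C_r \models D_r$. The same failure undermines your lifting step, because ``the repair-substitution of $C$ that produces $C_r$'' is not canonical.

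The salvageable part of your plan is the per-pair lift, which is essentially what the paper does, but organized differently: it starts from a pair $C_r \models D_r$ supplied by Definition~\ref{def:entialment}, invokes the classical subsumption theorem for the repair-free clauses to get $\theta_r$, and then reverses the repair applications one at a time, at each step matching the repair literal undone in $C$ with one undone in $D$ (possible because MD repair literals occur exactly where the corresponding similarity literals do, and $\theta_r$ preserves similarity and equality) and extending the substitution with $x_j \mapsto y_j$ for variables that no longer occur in the repaired clauses. If you replace your structural lemma by such an induction on un-applied repair literals --- carried out for the specific pairs given by the entailment relation, and verifying at each step the second condition of Definition~\ref{def:subsumption} --- the argument goes through; as written, your first paragraph cannot be repaired without abandoning the ``essentially unique repair'' premise.
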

\vspace{-5pt}
\noindent
We leverage Theorem~\ref{prop:coveragePositiveInverse} to check whether $C$ covers $e$ efficiently as follows.
Let $C^{md}$ and $G^{md}_{e}$ be the clauses that have the same head literal as $C$ and $G_{e}$ and 
contain all body literals in $C$ and $G_e$ without any connected repair literal and the ones where
all their connected repair literals correspond to some MDs, respectively.
Thus, if there is no subsumption between $C$ and $G_e$, our algorithm tries to find a subsumption between
$C^m$ and $G^{m}_{e}$. If there is no subsumption mapping between $C^m$ and $G^{m}_{e}$,   
$C$ does not cover $e$. Otherwise, let $C^{cfd}$ and $G^{cfd}_{e}$ be the set of body literals of $C$ and $G_e$
that do not appear in the body of $C^{md}$ and $G^{md}_{e}$, respectively.
We apply the repair literals in $C^{cfd}$ and $G^{cfd}_{e}$ in $C$ and $D$ and perform 
subsumption checking for pairs of resulting clauses. If every obtained clause of $C$ $\theta$-subsumes
at least one resulting clause of $G_e$, $C$ covers $e$. Otherwise, $C$ does not cover $e$.
We note than the resulting clauses are not repairs of $C$ and $G_e$ as they sill have the 
repair literals that correspond to some MD.

We follow a similar method to the one explained in the preceding paragraph to check whether clause $C$ covers a negative example with the difference that we use the semantic introduced in Definition~\ref{def:coverExampleNegative} to determine the coverage of negative examples.
Let $G_{e^{-}}$ be the ground bottom-clause for the negative example $e^{-}$.
We generate all repaired clauses of the clause $C$ as described in Section~\ref{section:approach-explanation}.
Then, we check whether each repaired clause of $C$ $\theta$-subsumes $G_{e^{-}}$
the same way as checking $\theta$-subsumption for $C$ and a ground bottom-clause for a positive example.
$C$ $\theta$-subsumes $G_{e^{-}}$ as soon as one repaired clause of $C$ $\theta$-subsumes 
$G_{e^{-}}$. 
\begin{proposition}
\label{prop:coverageNegative}
Given the clause $C$ and ground bottom-clause $G_{e^{-}}$ for negative example $e^{-}$ relative to database $I$, 
clause $C$ covers $e^{-}$ iff a repair of $C$ $\theta$-subsumes $G_{e}^{-}$.
\end{proposition}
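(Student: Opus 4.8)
The plan is to unfold Definition~\ref{def:coverExampleNegative} into a statement about ordinary clauses over ordinary databases, and then reduce the two directions to the classical bottom-clause property together with Theorems~\ref{prop:coveragePositive} and~\ref{prop:coveragePositiveInverse}. Viewing $C$ as a one-clause definition, its repaired definitions are exactly its repaired clauses, so Definition~\ref{def:coverExampleNegative} says that $C$ covers $e^{-}$ iff some repaired clause $C_r$ of $C$ covers $e^{-}$ over some repair $I'$ of $I$, i.e., $I' \wedge C_r \models e^{-}$. Here $C_r$ and $I'$ carry no repair literals and no heterogeneity, so I can use the standard fact (the analogue of Proposition~\ref{prop:BC1} over $I'$): the ground bottom-clause $G'$ of $e^{-}$ relative to the stable instance $I'$ is the most specific clause of the depth-$d$ hypothesis space covering $e^{-}$ over $I'$, hence a clause in that space covers $e^{-}$ over $I'$ iff it $\theta$-subsumes $G'$. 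The soundness half for clauses with $\approx$ and equality literals is Theorem~\ref{prop:coveragePositive}, and the completeness half for Datalog clauses is standard.

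The key auxiliary step is a bridge lemma: the repaired clauses of the ground bottom-clause $G_{e^{-}}$ built over $I$ coincide, up to $\theta$-subsumption equivalence, with the ground bottom-clauses $G'$ of $e^{-}$ over the repairs $I'$ of $I$. This is what the constructions of Sections~\ref{sec:HetroDef} and~\ref{section:bottom-clause-construction} are designed to guarantee: enforcing an MD on $I$ mirrors applying its repair literals to the clause, and enforcing a violated CFD mirrors applying the corresponding CFD repair literals. I would prove it by induction on the length of the sequence of MD/CFD applications producing $I'$, matching each database step (Definition~\ref{def:MDEnforce} for MDs, the attribute-modification repairs of Section~\ref{sec:CFD} for CFDs) with the application of the associated repair literal to $G_{e^{-}}$, and observing that the minimal-repair simplification used when generating CFD repair literals mirrors the minimal-repair semantics at the database level.

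Given the bridge, both directions are short. For $(\Rightarrow)$: if $C$ covers $e^{-}$, take $C_r$ and $I'$ with $I' \wedge C_r \models e^{-}$; then $C_r \subseteq_\theta G'$ for the ground bottom-clause $G'$ of $e^{-}$ over $I'$, and by the bridge $G'$ is a repaired clause of $G_{e^{-}}$, so the coverage test of Section~\ref{sec:coverage} --- which accepts $C_r$ as soon as it $\theta$-subsumes some repaired clause of $G_{e^{-}}$ --- succeeds, i.e., a repair of $C$ $\theta$-subsumes $G_{e^{-}}$. For $(\Leftarrow)$: if a repaired clause $C_r$ of $C$ passes that test it $\theta$-subsumes some repaired clause $G'$ of $G_{e^{-}}$; by the bridge $G'$ is the ground bottom-clause of $e^{-}$ over some repair $I'$, and since $\theta$-subsumption is sound by Theorem~\ref{prop:coveragePositive} we get $I' \wedge C_r \models e^{-}$, whence $C$ covers $e^{-}$ by Definition~\ref{def:coverExampleNegative}.

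The main obstacle is the bridge lemma. Two points need care. First, over $I$ a value may approximately match several others, so $G_{e^{-}}$ can carry literals coming from tuples that survive only in different repairs; one must check that after applying the MD repair literals the resulting clause is still $\theta$-subsumption-equivalent to the depth-$d$ bottom clause of one genuine repair, and that the CFD repair literals never yield a clause with no database counterpart. Second, the completeness direction used in $(\Rightarrow)$ --- coverage implies $\theta$-subsumption --- fails for entailment in general, so for the repair-literal part it must be recovered by restricting to MD repair literals via Theorem~\ref{prop:coveragePositiveInverse} and by the explicit enumeration of repaired clauses for the CFD repair literals, exactly as in the coverage procedure; keeping these two mechanisms aligned with the database-repair side is the delicate part.
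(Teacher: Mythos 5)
Your argument is correct in substance, but it takes a noticeably different (and more complete) route than the paper. The paper's own proof is a one-line appeal to soundness: if a repaired clause of $C$ $\theta$-subsumes $G_{e^{-}}$, then by Theorem~\ref{prop:coveragePositive} it entails $G_{e^{-}}$, hence $C$ covers $e^{-}$ under Definition~\ref{def:coverExampleNegative} --- i.e., the paper only argues the sufficiency direction and works directly with the heterogeneous subsumption notion of Definition~\ref{def:subsumption}, leaving the necessity direction implicit. You instead unfold everything down to repaired clauses over repaired instances: your ``bridge lemma'' identifying the repaired clauses of $G_{e^{-}}$ with the ground bottom-clauses of $e^{-}$ over the repairs of $I$ is essentially Theorem~\ref{lemma:BC} restated for the negative example (the construction is the same algorithm, so you could cite it rather than re-derive it by induction on the repair sequence), and you then close both directions with classical bottom-clause soundness/completeness, using Theorem~\ref{prop:coveragePositiveInverse} and explicit enumeration of CFD repairs to recover completeness where entailment alone would not give subsumption. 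What your route buys is an explicit proof of the ``only if'' half of the iff and a clear accounting of where completeness of $\theta$-subsumption (depth bound, unsampled bottom clause) is needed; what the paper's route buys is brevity and the ability to stay at the level of Definition~\ref{def:subsumption} without detouring through database repairs. One small caution: your reading of ``a repair of $C$ $\theta$-subsumes $G_{e^{-}}$'' as ``subsumes some repaired clause of $G_{e^{-}}$'' follows the coverage procedure of Section~\ref{sec:coverage} rather than the literal statement, so you should note (one sentence, via Theorem~\ref{prop:coveragePositive} and Definition~\ref{def:entialment}) that the two formulations agree; with that remark added, your proof stands.
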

\noindent

\label{section:heterolearn-exchange-cleaning-sampling}
{\bf Commutativity of Cleaning \& Learning:}
An interesting question is whether our algorithm produces essentially the same answer as the one that 
learns a repaired definition over each repair of $I$ separately.
We show that, roughly speaking, our algorithm delivers the same information as the one that separately learns over each repaired instance. Thus, our algorithm learns using 
the compact representation without any loss of information.
Let $\mathit{RepairedCls}(C)$ denote the set of all repaired clauses of clause $C$.
Let {\it BC($e$, $I, \Sigma$, $\Phi$)} denote the bottom-clause generated by 
applying the bottom-clause construction algorithm in Section~\ref{section:bottom-clause-construction} 
using example $e$ over database $I$ with the set of MDs $\Sigma$ and CFDs $\Phi$. 
Also, let {\it BC$_{r}$($e$, $\mathit{RepairedInst}(I, \Sigma, \Phi)$)} be the set of 
repaired clauses generated by applying the bottom-clause construction to each repair of $I$ for $e$. 
\begin{theorem}
\label{lemma:BC}
	Given database $I$ with MDs $\Sigma$, CFDs $\Phi$ and set of positive examples 
	$E^+$, for every positive example $e \in E^+$
	$\mathit{BC_{r}}(e, \mathit{RepairedInst}(I, \Sigma, \Phi)) = 
	\mathit{RepairedCls}(\mathit{BC}(e, I, \Sigma))$.
\end{theorem}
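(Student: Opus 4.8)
The plan is to prove the set equality by establishing both inclusions through an explicit correspondence between (i) sequences of repair-literal applications that turn $\mathit{BC}(e, I, \Sigma)$ into one of its repaired clauses, and (ii) repairs of $I$, restricted to the fragment of the database that Algorithm~\ref{algorithm:bottom-clause-original-mds} reaches from $e$. The first thing I would prove is a structural lemma: if $I_r$ is any repair of $I$ and the MD enforcements and CFD value modifications used to build $I_r$ agree, on the $e$-reachable fragment, with the choices recorded by a particular application sequence of the repair literals of $\mathit{BC}(e, I, \Sigma)$, then Algorithm~\ref{algorithm:bottom-clause-original-mds} run on $I_r$ for $e$ visits exactly the images (under value unification) of the tuples in $I_e$, and it adds no repair literals, since $I_r$ is stable and satisfies $\Phi$. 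The only delicate point here is that a fresh value $v_{a,b}$ created by a match behaves identically whether it is introduced at the database level (Definition~\ref{def:MDEnforce}) or represented by a fresh variable at the clause level, so the two traversals stay in lockstep.

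For the inclusion $\mathit{BC_{r}}(e, \mathit{RepairedInst}(I, \Sigma, \Phi)) \subseteq \mathit{RepairedCls}(\mathit{BC}(e, I, \Sigma))$, I would fix a repair $I_r$ together with the ordered sequence of MD enforcements and CFD modifications that produces it, and build $C_r = \mathit{BC}(e, I_r, \ldots)$. By Proposition~\ref{prop:BC1} applied to $I_r$, $C_r$ covers $e$, and $C_r$ has no repair literals since $I_r$ is stable and $\Phi$-satisfying. I would then read off the matching application sequence for the repair literals of $\mathit{BC}(e, I, \Sigma)$: every similarity edge followed by Algorithm~\ref{algorithm:bottom-clause-original-mds} over $I$ has a corresponding pair of repair literals whose restriction literal $v_x = v_y$ names exactly the unified value used in $I_r$; every pair of body literals of $\mathit{BC}(e, I, \Sigma)$ that violates a CFD corresponds to a pair of tuples of $I_e$ that violates that CFD, repaired in $I_r$ by one of the modifications encoded in the repair and restriction literals (here I also invoke the handling of induced CFD violations described in Section~\ref{section:bottom-clause-construction}). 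Applying these repair literals in the order matching the operations that produced $I_r$ yields $C_r$ up to renaming, by the structural lemma.

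For the reverse inclusion, I would fix $C_r \in \mathit{RepairedCls}(\mathit{BC}(e, I, \Sigma))$, given by some application sequence of repair literals, and turn each applied literal into a database-level operation: an applied MD repair literal becomes the enforcement of the corresponding MD on the corresponding tuple pair of $I$, and an applied CFD repair literal becomes the corresponding attribute-value modification on the corresponding tuples of $I_e$. This partial specification is then extended to a full repair $I_r$ of $I$ by enforcing all remaining MDs and repairing all remaining CFD violations on the part of $I$ not reached from $e$; such an extension exists because MD enforcement always terminates in a stable instance and the CFD set is consistent (the standing assumption of Section~\ref{sec:CFD}). The structural lemma then gives $\mathit{BC}(e, I_r, \ldots) = C_r$ up to renaming, so $C_r$ lies in the left-hand side.

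The part I expect to be the main obstacle is matching the clause-level repair semantics to the database-level repair semantics exactly --- neither too coarse nor too fine. Concretely, per CFD violation the algorithm adds only the right-hand-side repair literals that reuse current variables (the minimal-repair restriction of Section~\ref{section:bottom-clause-construction}) together with the left-hand-side repair literals, and it iterates to cover induced violations; I would need to check that the resulting family of repaired clauses coincides with $\{\,\mathit{BC}(e, I_r, \ldots) : I_r \text{ a minimal repair}\,\}$, paralleling the minimal-repair semantics for CFDs. A secondary subtlety is second-order reachability: after a unification in $I_r$, a new exact match could a priori connect a tuple that the similarity traversal over $I$ did not visit; I would argue this cannot occur because the unified value is fresh and appears in no original tuple, and any chain routed through fresh values is already followed, edge for edge, by the similarity traversal over $I$, so the two reachable sets coincide. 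The remaining work --- bookkeeping of fresh variables versus fresh values, head-connectedness after dropping literals, and order-independence of non-interacting repair applications --- is routine and parallels the discussions around Definition~\ref{def:MDEnforce} and Example~\ref{example:multipleStableClauses}.
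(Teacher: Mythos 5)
Your proposal is correct and follows essentially the same route as the paper: both prove the equality by identifying the repaired clauses of $\mathit{BC}(e,I,\Sigma)$ with the bottom-clauses built over repairs of the $e$-relevant fragment of $I$, and both rely on head-connectedness pruning to absorb tuples that become disconnected under a particular repair; the paper merely packages the correspondence through canonical database instances (repairing $I^{C}$ and comparing with the canonical instances of the clauses $C_i$ built over each repair $J_i$), whereas you phrase it as a lockstep-traversal lemma plus an explicit two-way translation between repair-literal application sequences and database-level repair operations extended arbitrarily outside the reachable fragment. The two subtleties you flag are genuine but do not undermine the argument: your fresh-value reachability observation is precisely what justifies the paper's implicit claim that every tuple appearing in some $J^{C_i}$ already appears in some repair of $I^{C}$, and the alignment of the clause-level minimal-repair restriction for CFDs with the database-level repair semantics is treated no more rigorously in the paper's own proof, so your plan is at least as careful as the published one on both points.
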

\noindent
Now, assume that {\it Generalize($C, e',$ $I, \Sigma,$ $\Phi$)} denotes the clause produced by generalizing $C$ 
to cover example $e'$ over database $I$ with the set of MDs $\Sigma$ and CFDs $\Phi$ 
in a single step of applying the 
algorithm in Section~\ref{sec:generalization}. 
Give a set of repaired clauses ${\bf C}$, let $\mathit{Generalize_{r}}$ $({\bf C}, e', \mathit{RepairedInst}$ 
$(I, \Sigma, \Phi))$ be the set of repaired clauses produced by generalizing 
every repaired clause in ${\bf C}$ to cover example $e'$ in some repair of $I$ using
the algorithm in Section~\ref{sec:generalization}.
\begin{theorem}
	\label{lemma:GN}
	Given database $I$ with MDs $ \Sigma$ and set of positive examples $E^+$
	$\mathit{Generalize_{r}}(StableCls(C), e', \mathit{RepairedInst}(I,$ $\Sigma, \Phi))$ $= 
	 \mathit{RepairedCls}(\mathit{Generalize}(C, I, e', \Sigma, \Phi)).$
\end{theorem}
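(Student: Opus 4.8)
The plan is to prove the equality of the two sets of repaired clauses by combining Theorem~\ref{lemma:BC} (commutativity of bottom-clause construction with repairing) with Theorem~\ref{prop:coveragePositive} and its converse Theorem~\ref{prop:coveragePositiveInverse}, together with a careful comparison of the single generalization step performed by DLearn on a clause-with-repair-literals $C$ versus the generalization step performed by ProGolem on each repaired clause of $C$. Concretely, I would set up the two sides: on the left, we start from the repaired clauses $StableCls(C)$, and for each repair $I_r$ of $I$ and each $C_r \in StableCls(C)$ we drop the blocking literals of $C_r$ relative to $I_r$ and $e'$ to obtain a generalization $C_r'$; on the right, we drop blocking literals from $C$ (relative to the heterogeneous database, using ground bottom-clauses as in Section~\ref{sec:coverage}) to obtain $\mathit{Generalize}(C,I,e',\Sigma,\Phi)$, and then form its repaired clauses. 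The goal is a bijection between these two multisets.

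The first key step is to fix the total order on relation and repair-literal symbols that DLearn uses, and observe that this induces a consistent order both on $C$ and on every member of $StableCls(C)$: applying a repair literal $V_c(x,v_x)$ only renames variables, so the relative order of the surviving non-repair literals is preserved, and the repair literals themselves are placed in the order by our convention of appending condition/argument variables to their symbols. The second step is the crucial correspondence: a non-repair literal $L_i$ is blocking in $C$ (relative to the ground bottom-clause $G_{e'}$, using the coverage test of Section~\ref{sec:coverage}) if and only if, for every repair $I_r$ and every $C_r \in StableCls(C)$, the image of $L_i$ under the repair substitutions is blocking in $C_r$ relative to $I_r$. This is where Theorem~\ref{lemma:BC} enters — it tells us that the ground bottom-clauses used on the left (built over each repair $I_r$) are exactly the repaired clauses of the ground bottom-clause $G_{e'}$ used on the right — and where Theorems~\ref{prop:coveragePositive} and~\ref{prop:coveragePositiveInverse} enter, since they guarantee that "$(T\leftarrow L_1,\dots,L_i)$ covers $e'$" transfers correctly between $C$ (via $\theta$-subsumption into $G_{e'}$) and its repaired clauses (via $\theta$-subsumption into the repaired clauses of $G_{e'}$). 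For repair literals and restriction/induced-equality literals, dropping is forced by head-connectedness: a repair literal is removed on the right exactly when its unique connecting non-repair literal is removed, and the same removal happens in every repaired clause on the left, so the correspondence is preserved. The third step assembles these: the set of literals dropped from $C$ maps, repair-by-repair, to the set of literals dropped from each $C_r$, so $\mathit{Generalize}(C,I,e',\Sigma,\Phi)$ has, as its repaired clauses, exactly the clauses $\mathit{Generalize_r}$ produces — giving the set equality. I would then note that minimality of the generalization step (Proposition~\ref{prop:minimal}) on both sides makes the "single step" well-defined so the two constructions are genuinely comparable.

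The main obstacle I anticipate is the blocking-literal correspondence in the presence of CFD repair literals, because a single CFD repair literal in $C$ can be instantiated differently in different repaired clauses (some applications fire the condition $c$, others do not), so the image of a literal $L_i$ on the left is not a single literal but a family indexed by repaired clauses, and "blocking" is a statement quantified over all substitutions $\theta$. I would handle this by working with the refined coverage test of Section~\ref{sec:coverage}: split $C$ into its MD-part $C^{md}$ and CFD-part $C^{cfd}$, handle the MD repair literals via the tight equivalence of Theorem~\ref{prop:coveragePositiveInverse}, and for the CFD part argue that because each CFD repair literal's condition $c$ is a Boolean combination of $=,\neq,\approx$ over clause variables, its truth value is determined identically whether evaluated in $C$ or in the corresponding position of any $C_r$ (the restriction and induced-equality literals carry exactly this information), so the "family" of images is in fact uniform and the quantifier over $\theta$ on each side ranges over matching substitution sets. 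A secondary subtlety is bookkeeping the fresh variables introduced when we split repeated variable occurrences before adding repair literals (Section~\ref{sec:HetroDef}); I would point out that this renaming is performed identically in $\mathit{BC}$ and in each $\mathit{BC_r}$ by Theorem~\ref{lemma:BC}, so it introduces no mismatch, and the $\mathit{RepairedCls}$ operation on both sides quotients out the harmless variable-naming choices in the same way.
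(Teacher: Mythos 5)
Your plan hinges on a blocking-literal correspondence that is both quantified incorrectly and too strong, and this is where the attempt breaks. Under Definition~\ref{def:coverExample}, a prefix of $C$ fails to cover $e'$ exactly when \emph{some} repaired prefix fails in \emph{every} repair of $I$, so ``$L_i$ is blocking in $C$'' does not reduce to ``the image of $L_i$ is blocking in every $C_r$ relative to every repair $I_r$''; more importantly, the set of literals dropped is genuinely non-uniform across repaired clauses, which is precisely the situation the theorem is about. The paper's own argument (a concrete case analysis, Cases 1--5, over a representative schema with $R_1(a,b)$, $R_2(b',c)$, $R_2(b'',\cdot)$ and two competing MD applications) exhibits this: in its Case~2 the repaired clause $C_1$ drops the merged literal $R_2(x_b,c)$ coming from the unification of $b$ with $b'$, while the other repaired clause $C_2$ keeps the (unmerged) image of $R_2(b',c)$ and instead drops the literal merged by the unification of $b$ with $b''$. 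There is no single set of non-repair literals of $C$ whose images are exactly what is dropped in both $C_1$ and $C_2$, so the bijection you propose to assemble from a uniform dropped set cannot match $\mathit{Generalize_{r}}$ on both repairs. The correct comparison is per matched pair --- $C$ against the ground bottom-clause $G_{e'}$ versus each $C_r$ against the corresponding repaired clause $G'_{e'}$ of $G_{e'}$ (this pairing is where Theorem~\ref{lemma:BC} is legitimately used) --- and the equality of the theorem emerges only after $\mathit{RepairedCls}$ is applied to the right-hand side, not from literal-by-literal uniformity.

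A second concrete error is your treatment of repair literals as passive passengers removed only through head-connectedness. In the paper's Case~2 the blocking literals of $C$ include the repair literals $V(b,x_b)$, $V(b',x_{b'})$ and the restriction literal $x_b=x_{b'}$ themselves, even though $R_1(a,b)$, to which they are connected, is retained: the repair literals force a unification in the repaired clause that $G_{e'}$ does not support, even when each adjacent non-repair literal maps individually (note also that a repair literal connects the \emph{two} literals being merged, not a unique one). Your rule ``a repair literal is removed exactly when its connecting non-repair literal is removed'' would keep these literals, and then $\mathit{RepairedCls}$ of your generalization would still merge $b$ with $b'$, so its repaired clauses would not coincide with the generalizations of $C_1$ and $C_2$. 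Finally, your ``if and only if'' transfer leans on Theorem~\ref{prop:coveragePositiveInverse}, which is only available when all repair literals come from MDs; your proposed splitting into $C^{md}$ and $C^{cfd}$ is a sketch of exactly the hard part (the paper, for its part, argues only the MD case explicitly and asserts the CFD case is similar, but its per-repair blocking analysis does not require the uniformity your argument assumes).
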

\vspace{-5pt}
\noindent

		\section{Implementation}
\label{section:heterolearn-implementation-details}
DLearn is implemented on top of VoltDB, {\it voltdb.com}, a main-memory RDBMS. We use the indexing and query processing mechanisms of the database system to create the (ground) bottom-clauses efficiently. 
The set of tuples $I_{e}$ that DLearn gathers to build a bottom-clause may be large if many tuples in $I$ are relevant to $e$, particularly when learning over a large database.
To overcome this problem, DLearn randomly samples from the tuples in $I_{e}$ to obtain a smaller tuple set $I_{e}^s \subseteq I_{e}$
and crates the bottom-clause based on the sampled data ~\cite{progolem,castor:SIGMOD17}..
To do so, DLearn restricts the number of literals added to the bottom-clause per relation through a parameter called {\it sample size}.
To implement similarity over strings, DLearn uses the operator defined as the average of the {\it Smith-Waterman-Gotoh} and the {\it Length} similarity functions.
The {\it Smith-Waterman-Gotoh} function~\cite{Gotoh1982AnIA} measures the similarity of two strings based on their local sequence alignments.
The {\it Length} function computes the similarity of the length of two strings by dividing the length of the smaller string by the length of the larger string.
To improve efficiency, we precompute the pairs of similar values.

		\section{Experiments}
\label{section:heterolearn-experiments}
\subsection{Experimental Settings}
\subsubsection{Datasets}

\begin{table}
	\centering
	\caption{{\small  Numbers of relations (\#R), tuples (\#T), positive examples (\#P), and negative examples (\#N) for each dataset.}}
	\vspace{-10pt}
	\begin{tabular} { c | c | c | c | c }
		\hline
		Name & \#R & \#T & \#P & \#N \\
		\hline
		IMDB & 9 & 3.3M & \multirow{2}{*}{100} & \multirow{2}{*}{200} \\
		OMDB & 15 & 4.8M & &  \\
		\hline
		Walmart & 8 & 19K & \multirow{2}{*}{77} & \multirow{2}{*}{154} \\
		Amazon & 13 & 216K & & \\
		\hline
		DBLP & 4 & 15K & \multirow{2}{*}{500} & \multirow{2}{*}{1000} \\
		Google Scholar & 4 & 328K & & \\
	\end{tabular}
	\label{table:datasets-statistics}
\end{table}

We use databases shown in Table~\ref{table:datasets-statistics}.

\noindent
{\bf IMDB + OMDB}:
The Internet Movie Database (IMDB) and Open Movie Database (OMDB) contain information about movies, such as their titles, year and country of production, genre, directors, and actors~\cite{magellandata}. 
We learn the target relation {\it dramaRestrictedMovies(imdbId)}, which contains the {\it imdbId} of movies that are of the {\it drama} genre and are rated R. 
The {\it imdbId} is only contained in the IMDB database, the genre information is contained in both databases, and the rating information is only contained in the OMDB database. 
We specify an MD that matches movie titles in IMDB with movie titles in OMDB.
We refer to this dataset with one MD as {\bf IMDB + OMDB (one MD)}.
We also create MDs that match cast members and writer names between the two databases. We refer to the dataset that contains the three MDs as {\bf IMDB + OMDB (three MDs)}.

\noindent
{\bf Walmart + Amazon}:
The Walmart and Amazon databases contain information about products, such as their brand, price, categories, dimensions, and weight~\cite{magellandata}.
We learn the target relation {\it upcOfComputersAccessories(upc)}, which contains the {\it upc} of products that are of category {\it Computers Accessories}. 
The {\it upc} is contained in the Walmart database and the information about categories of products is contained in the Amazon database. 
We use an MD that connects the product names across the datasets. 

\noindent
{\bf DBLP + Google Scholar}:
The DBLP and Google Scholar databases contain information about academic papers, such as their titles, authors, and venue and year of publication~\cite{magellandata}. 
The information in the Google Scholar database is not clean, complete, or consistent, e.g., many tuples are missing the year of publication.
Therefore, we aim to augment the information in the Google Scholar database with information from the DBLP database.
We learn the target relation {\it gsPaperYear(gsId, year)}, which contains the Google Scholar id {\it gsId} and the {\it year} of publication of the paper as indicated in the DBLP database.
We use two MDs that match titles and venues in datasets.
\subsubsection{CFDs}
We find 4, 6, and 2 CFDs for IMDB+OMDB, Amazon+Walmart, and
DBLP+Google Scholar, respectively, e.g., 
{\it id} determines {\it title} in Google Scholar.
To test the performance of DLearn on data that contains CFD violations, we inject each aforementioned dataset with varying proportions of CFD violations, $p$, randomly. For example,
$p$ of 5\% means that 5\% of tuples in each relation  violate at least one CFD.

\subsubsection{Systems, Metrics, and Environment}
We compare DLearn against three baseline methods to evaluate the 
handling of MDs over datasets with only MDs. 
These methods use {\it Castor}, a state-of-the-art relational learning system \cite{castor:SIGMOD17}.

\noindent
{\bf Castor-NoMD}: We use Castor to learn over the original databases. It does not use any information from MDs.

\noindent
{\bf Castor-Exact}: We use Castor, but allow the attributes that appear in an MD to be joined through exact joins. Therefore, this system uses information from MDs but only considers exact matches between values.
	
\noindent
{\bf Castor-Clean}: We resolve the heterogeneities between entity names in attributes that appear in an MD by matching each entity in one database with the most similar entity in the other database. 
We use the same similarity function used by DLearn. 
Once the entities are resolved, we use Castor to learn over the unified and clean database.

To evaluate the effectiveness and efficiency of the version of 
DLearn that supports both MDs and CFDs, {\bf DLearn-CFD} 
we compare it with a version of DLearn that supports only MDs and is run over a version of the database whose CFD violations are repaired,
{\bf DLearn-Repaired}. 
We obtain this repair using the minimal repair method, which is popular in repairing CFDs \cite{DBLP:series/synthesis/2012Fan}.
This enables us to evaluate our method for each 
type of inconsistencies separately.	
We perform 5-fold cross validation over all datasets and report the average F1-score and time over the cross validation.
DLearn uses the parameter {\it sample size} to restrict the size of (ground) bottom-clauses. We fix {\it sample size} to 10. 
All systems use 16 threads to parallelize coverage testing.
We use a server with 30 2.3GHz Intel Xeon E5-2670 processors, running CentOS Linux with 500GB of main memory.

\subsection{Empirical Results}
\label{section-experiments-learning}
\begin{table*}
		\centering
		\caption{Results of learning over all datasets with MDs. Number of top similar matches denoted by $k_m$.}
		     \vspace{-10pt}
			\begin{tabular} { c|c|c|c|c|c|c|c }
				\hline
				\multirow{2}{*}{Dataset} & \multirow{2}{*}{Metric} & Castor- & Castor- & Castor- & \multicolumn{3}{c}{DLearn}  \\
				\cline{6-8}
				& & NoMD & Exact & Clean & $k_m=2$ & $k_m=5$ & $k_m=10$ \\
				\hline
				IMDB + OMDB & F1-score & 0.47 & 0.59 & 0.86 & 0.90 & {\bf 0.92} & {\bf 0.92} \\
				(one MD) & Time (m) & 0.12 & 0.13 & 0.18 & 0.26 & 0.42 & 0.87 \\
				\hline
				\hline
				IMDB + OMDB & F1-score & 0.47 & 0.82 & 0.86 & 0.90 & {\bf 0.93} & 0.89 \\
				(three MDs) & Time (m) & 0.12 & 0.48 & 0.21 & 0.30 & 25.87 & 285.39 \\
				\hline
				\hline
				Walmart + & F1-score & 0.39 & 0.39 & 0.61 & 0.61 & 0.63 & {\bf 0.71} \\
				Amazon & Time (m) & 0.09 & 0.13 & 0.13 & 0.13 & 0.13 & 0.17 \\
				\hline
				\hline
				DBLP + & F1-score & 0 & 0.54 & 0.61 & 0.67 & 0.71 & {\bf 0.82} \\
				Google Scholar & Time (m) & 2.5 & 2.5 & 3.1 & 2.7 & 2.7 & 2.7 \\
		\end{tabular}
		\label{table:results-all}
\end{table*}
\begin{table*}
    \centering
    \caption{Results of learning over all datasets with MDs and CFD violations. $p$ is the percentage of CFD violation.}
         \vspace{-10pt}
        \begin{tabular}{c|c|c|c|c|c|c|cc}
            \hline
    				\multirow{2}{*}{Dataset} & \multirow{2}{*}{Metric} &
    				\multicolumn{3}{c|}{DLearn-CFD} &
    				\multicolumn{3}{c}{DLearn-Repaired}  \\
    				\cline{3-8}
    				& & $p=0.05$ & $p=0.10$ & $p=0.20$ & $p=0.05$ & $p=0.10$ & $p=0.20$ \\
    				\hline
    		\multirow{2}{*}{\begin{tabular}[c]{@{}c@{}}IMDB + \\ OMDB (three MDs)\end{tabular}} & F-1 Score & {\bf0.79} & {\bf0.78} & {\bf0.73} & 0.76 & 0.73 & 0.50 \\
             & Time (m) & 11.15 & 16.26 & 26.95 & 5.70 & 12.54 & 22.28 \\
             \hline
             \hline
            \multirow{2}{*}{\begin{tabular}[c]{@{}c@{}}Walmart + \\ Amazon\end{tabular}} & F-1 Score & {\bf0.64} & {\bf0.61} & 0.54 & 0.49 & 0.52 & {\bf0.56} \\
             & Time (m) & 0.17 & 0.2 & 0.23 & 0.18 & 0.18 & 0.19 \\
             \hline
             \hline
            \multirow{2}{*}{\begin{tabular}[c]{@{}c@{}}DBLP + \\ Google Scholar\end{tabular}} & F-1 Score & {\bf0.79} & {\bf0.68} & {\bf0.47} & 0.73 & 0.55 & 0.23 \\
             & Time (m) & 5.92 & 7.04 & 8.57 & 2.51 & 2.6 & 6.51
        \end{tabular}
    \label{table:results-fds}
\end{table*}
\begin{table*}
    \centering
     \caption{\small Learning  over the IMDB+OMDB (3 MDs) with CFD violations by increasing positive (\#P) and negative (\#N) examples.}
     \vspace{-10pt}
    \begin{tabular}{c|c|c|c|c|c|c|c|cc}
        \hline
        \multirow{2}{*}{\#P/\#N} & \multicolumn{4}{c|}{$k_m=5$} & \multicolumn{4}{c}{$k_m=2$} \\
        \cline{2-7}
         & 100/200 & 500/1k & 1k/2k& 2k/4k & 100/200 & 500/1k & 1k/2k & 2k/4k \\
         \hline 
        F-1 Score &0.78 & 0.82 & 0.81 & 0.82 & 0.78 & 0.79 & 0.81 & 0.81 \\
        Time (m) &16.26 & 72.16 & 121.04 & 317.5 & 0.34 & 2.01 & 2.76 & 5.19
    \end{tabular}
\label{table:results-fds-scalability-2}
\end{table*}
\begin{figure*} 
	\centering
	\includegraphics[width=0.3\linewidth]{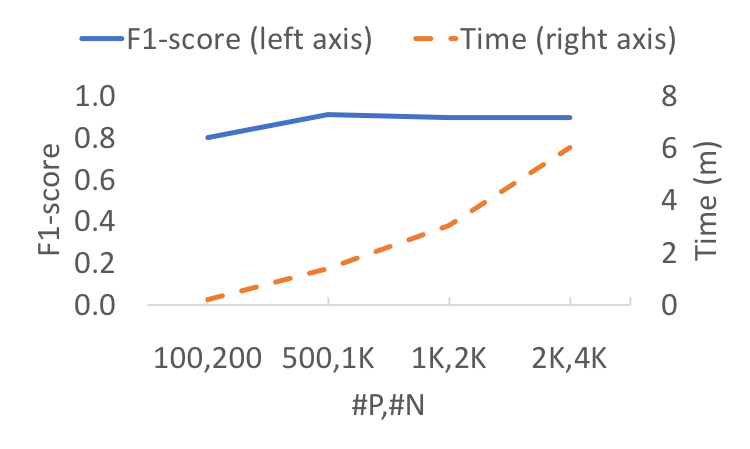}
	\includegraphics[width=0.3\linewidth]{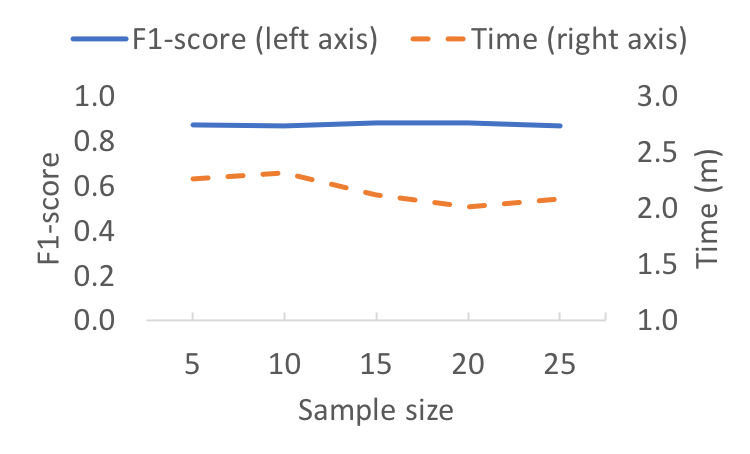}
	\includegraphics[width=0.3\linewidth]{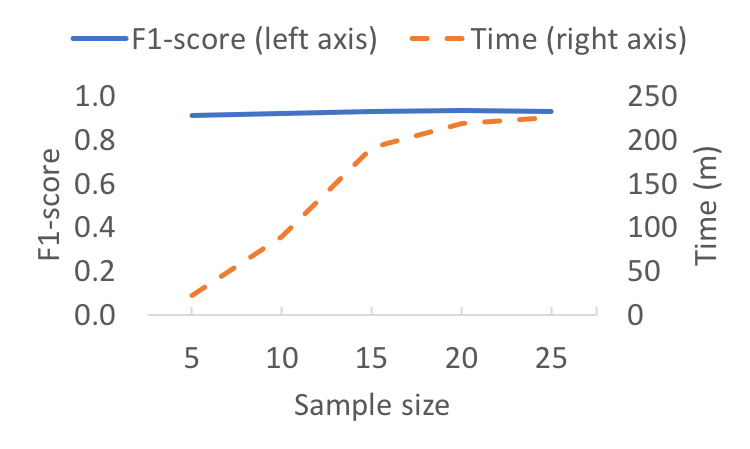}
	\vspace{-10pt}
	\caption{{\small Learning over the IMDB+OMDB (3 MDs) dataset while increasing the number of positive and negative (\#P, \#N) examples (left) and while increasing sample size for $k_m=2$ (middle) and $k_m=5$ (right).}}
	\label{figure:results-scalability}
\end{figure*}
\subsubsection{Handling MDs}
Table~\ref{table:results-all} shows the results over all datasets using DLearn and the baseline systems.
DLearn obtains a better F1-score than the baselines for all datasets.
Castor-Exact obtains a competitive F1-score in the IMDB + BOM dataset with three MDs. 
The MDs that match cast members and writer names between the two databases contain many exact matches. 
DLearn also learns effective definitions over heterogeneous databases efficiently.
Using MDs enables DLearn to consider more patterns, thus, learn a more effective definition.
For example, Castor-Clean learns the following definition over Walmart + Amazon:
{\footnotesize
\begin{align*}
\mathit{u}&\mathit{pcComputersAccessories}(v0) \leftarrow \mathit{walmart\_ids}(v1,v2,v0),\\ 
&\mathit{walmart\_title}(v1,v9), v9 = v10,\\
&\mathit{walmart\_groupname}(v1,\mathit{``Electronics - General"}), \\ 
& \mathit{amazon\_title}(v11,v10),  \mathit{amazon\_listprice}(v11,v16).\\ 
&\text{(positove covered=29, negative covered=11)}\\
\mathit{u}&\mathit{pcComputersAccessories}(v0) \leftarrow \mathit{walmart\_ids}(v1,v2,v0),\\ 
&\mathit{walmart\_title}(v1,v6), v6 = v7, \mathit{amazon\_title}(v8,v7),\\\
& \mathit{amazon\_category}(v8,\mathit{``Computers Accessories"}).\\
& \text{(positove covered=38, negative covered=4)}
\end{align*}
}
The definitions learned by DLearn over the same data is:
{\footnotesize
\begin{align*}
\mathit{u}&\mathit{pcComputersAccessories}(v0) \leftarrow \mathit{walmart\_ids}(v1,v2,v0),\\ 
& \mathit{walmart\_title}(v1,v9), v9 \approx v10,\\
& \mathit{amazon\_title}(v11,v10), \mathit{amazon\_itemweight}(v11,v16),\\
& \mathit{amazon\_category}(v11,\mathit{``Computers Accessories"}). \\
& \text{(positove covered=35, negative covered=5)}\\
\mathit{u}&\mathit{pcComputersAccessories}(v0) \leftarrow
\mathit{walmart\_ids}(v1,v2,v0), \\
& \mathit{walmart\_brand}(v1,\mathit{``Tribeca"}).\\
& \text{(positove covered=8, negative covered=0)}
\end{align*}
}
The definition learned by DLearn has higher precision; they have a similar recall.  
Castor-Clean first learns a clause that covers many positive examples but is not the desired clause. 
This affects its precision. DLearn first learns the desired clause and then learns a clause that has 
high precision.

The effectiveness of the definitions learned by DLearn depends on the number of matches considered in MDs, denoted by $k_m$.
In the Walmart + Amazon, IMDB + BOM (one MD), and DBLP + Google Scholar datasets, using a higher $k_m$ value results in learning a definition with higher F1-score.
When using multiple MDs or when learning a difficult concept, a high $k_m$ value affects DLearn's effectiveness.
In these cases, incorrect matches represent noise that affects DLearn's ability to learn an effective definition.
Nevertheless, it still delivers a more effective definition that other methods.
As the value of $k_m$ increases so does the learning time. 
This is because DLearn has to process more information.

Next, we evaluate the effect of sampling on DLearn's effectiveness and efficiency.
We use the IMDB + OMDB (three MDs) dataset and fix $k_m=2$ and $k_m=5$. We use 800 positive and 1600 negative examples for training, and 200 positive and 400 negative examples for testing.
Figure~\ref{figure:results-scalability} (middle and right) shows the F1-score and learning time of DLearn with $k_m=2$ and $k_m=5$, respectively, when varying the sample size.
For both values of $k_m$, the F1-score does {\it not} change significantly with different sampling sizes.
With $k_m=2$, the learning time remains almost the same with different sampling sizes.
However, with $k_m=5$, the learning time increases significantly.
Therefore, using a small sample size is enough for learning an effective definition efficiently.

\subsubsection{Handling MDs and CFDs}
Table~\ref{table:results-fds} compares DLearn-Repaired and DLearn-CFD. 
Over all three datasets DLearn-CFD performs (almost) equal to or substantially better than the baseline at all levels of violation injection. 
Since DLearn-CFD learns over all possible repairs of violating tuples, it has more available information and consequently 
its hypothesis space is a super-set of the one used by DLearn-Repaired.
In most datasets, the difference is more significant as the proportion
of violations increase. Both methods deliver less effective results when there are more CFD violations in the data. However, 
DLearn-CFD is still able to deliver reasonably effective definitions.
We use $k_m=10$ for DBLP+Google Scholar and Amazon+Walmart and $k_m=5$ for IMDB+OMDB as it takes a long time to use $k_m=5$ for the latter.

\subsubsection{Impact of Number of Iterations}
\label{section-experiments-iterations}
\begin{table}
    \centering
    \caption{Results of changing the number of iterations.}
    \vspace{-10pt}
    \begin{tabular}{c|c|c|c|cc}
        \hline
        \multirow{2}{*}{Metric} & \multicolumn{4}{c}{$k_m=5$} \\
        \cline{2-5}
         & d=2 & d=3 & d=4 & d=5 \\
         \hline 
        F-1 Score & 0.52 & 0.52 & 0.78 & 0.80 \\
        Time (m) & 1.35 & 4.35 & 16.26 & 37.56
    \end{tabular}
    \label{table:results-fds-scalability}
\vspace{-15pt}
\end{table}

We have used values 3, 4, and 5 for the number of iterations, $d$, 
for DBLP+Google Scholar, IMDb+OMDB, and Walmart+Amazon datasets, respectively.
Table~\ref{table:results-fds-scalability} shows data regarding the scalability of DLearn-CFD over IMDb+OMDB (3 MD + 4 CFD). 
A higher $d$-value increases both the effectiveness as well as the runtime. 
We fix the value $k_m$ at 5. 
A $d$-value higher than 4 generates a very modest increase in effectiveness with a substantial increase in runtime. 
This result indicates that for a given dataset, the learning algorithm can access most relevant tuples for a reasonable value of $d$. 

\subsubsection{Scalability of DLearn}
\label{section-experiments-scalability-examples}
We evaluate the effect of the number of training examples in both DLearn's effectiveness and efficiency.
We use the IMDB + OMDB (three MDs) dataset and fix $k_m=2$.
We generate 2100 positive and 4200 negative examples.
From these sets, we use 100 positive and 200 negative examples for testing.
From the remaining examples, we generate training sets containing 100, 500, 1000, and 2000 positive examples, and double the number of negative examples.
For each training set, we use DLearn with MD support 
to learn a definition.
Figure~\ref{figure:results-scalability} (left) shows the F1-scores and learning times for each training set.
With 100 positive and 200 negative examples, DLearn obtains an F1-score of 0.80.
With 500 positive and 1000 negative examples, the F1-score increases to 0.91.
DLearn is able to learn efficiently even with the largest training set.
We also evaluate DLearn with support for both MDs and CFDs' violations
and report the results in Table~\ref{table:results-fds-scalability-2}.
It indicate that DLearn with CFD and MD support can 
deliver effective results efficiently over large number of examples with $k_m=2$.

		\section{Related Work}
\label{section:heterolearn-related}

Data cleaning is an important and flourishing area in database research \cite{Fan2009ReasoningAR,DBLP:conf/sigmod/ChuIKW16,DBLP:conf/kr/BahmaniBKL12,DBLP:series/synthesis/2012Fan,Burdick:2016:DFL:2966276.2894748}.
Most data cleaning systems leverage declarative constraints to produce clean instances.

ActiveClean gradually cleans a dirty dataset to learn a convex-loss model, such as Logistic Regression \cite{Krishnan:2016:AID:2882903.2899409}. Its goal is to clean the underlying dataset such that the learned model becomes more effective as it receives more cleaned records potentially from the user.
Our objective, however, is to learn a model over the original data without cleaning it. 
Furthermore, ActiveClean does {\it not} address the problem of having multiple cleaned instances.

		\section{Conclusion \& Future Work}
\label{section:heterolearn-conclusion}
We investigated the problem of learning directly over heterogeneous data and proposed a new method that leverages constraints in learning to represent inconsistencies.
Since most of these quality problems have been modeled using declarative constraints, we 
plan to extend our framework to address more quality issues.

		\bibliographystyle{ACM-Reference-Format}
		\bibliography{./ref}


\begin{thebibliography}{54}


\ifx \showCODEN    \undefined \def \showCODEN     #1{\unskip}     \fi
\ifx \showDOI      \undefined \def \showDOI       #1{#1}\fi
\ifx \showISBNx    \undefined \def \showISBNx     #1{\unskip}     \fi
\ifx \showISBNxiii \undefined \def \showISBNxiii  #1{\unskip}     \fi
\ifx \showISSN     \undefined \def \showISSN      #1{\unskip}     \fi
\ifx \showLCCN     \undefined \def \showLCCN      #1{\unskip}     \fi
\ifx \shownote     \undefined \def \shownote      #1{#1}          \fi
\ifx \showarticletitle \undefined \def \showarticletitle #1{#1}   \fi
\ifx \showURL      \undefined \def \showURL       {\relax}        \fi
\providecommand\bibfield[2]{#2}
\providecommand\bibinfo[2]{#2}
\providecommand\natexlab[1]{#1}
\providecommand\showeprint[2][]{arXiv:#2}

\bibitem[\protect\citeauthoryear{Abedjan, Golab, and Naumann}{Abedjan
  et~al\mbox{.}}{2015}]%
        {Abedjan:2015:PRD:2811716.2811766}
\bibfield{author}{\bibinfo{person}{Ziawasch Abedjan}, \bibinfo{person}{Lukasz
  Golab}, {and} \bibinfo{person}{Felix Naumann}.}
  \bibinfo{year}{2015}\natexlab{}.
\newblock \showarticletitle{Profiling relational data: a survey}.
\newblock \bibinfo{journal}{\emph{The VLDB Journal}}  \bibinfo{volume}{24}
  (\bibinfo{year}{2015}), \bibinfo{pages}{557--581}.
\newblock


\bibitem[\protect\citeauthoryear{Abiteboul, Hull, and Vianu}{Abiteboul
  et~al\mbox{.}}{1994}]%
        {AliceBook}
\bibfield{author}{\bibinfo{person}{Serge Abiteboul}, \bibinfo{person}{Richard
  Hull}, {and} \bibinfo{person}{Victor Vianu}.}
  \bibinfo{year}{1994}\natexlab{}.
\newblock \bibinfo{booktitle}{\emph{{Foundations of Databases: The Logical
  Level}}}.
\newblock \bibinfo{publisher}{Addison-Wesley}.
\newblock


\bibitem[\protect\citeauthoryear{Abouzeid, Angluin, Papadimitriou, Hellerstein,
  and Silberschatz}{Abouzeid et~al\mbox{.}}{2013}]%
        {Abouzied:PODS:13}
\bibfield{author}{\bibinfo{person}{Azza Abouzeid}, \bibinfo{person}{Dana
  Angluin}, \bibinfo{person}{Christos~H. Papadimitriou},
  \bibinfo{person}{Joseph~M. Hellerstein}, {and} \bibinfo{person}{Abraham
  Silberschatz}.} \bibinfo{year}{2013}\natexlab{}.
\newblock \showarticletitle{Learning and verifying quantified boolean queries
  by example}. In \bibinfo{booktitle}{\emph{PODS}}.
\newblock


\bibitem[\protect\citeauthoryear{Afrati, Li, and Mitra}{Afrati
  et~al\mbox{.}}{2004}]%
        {10.1007/978-3-540-24741-8_27}
\bibfield{author}{\bibinfo{person}{Foto Afrati}, \bibinfo{person}{Chen Li},
  {and} \bibinfo{person}{Prasenjit Mitra}.} \bibinfo{year}{2004}\natexlab{}.
\newblock \showarticletitle{On Containment of Conjunctive Queries with
  Arithmetic Comparisons}. In \bibinfo{booktitle}{\emph{Advances in Database
  Technology - EDBT}}. \bibinfo{pages}{459--476}.
\newblock


\bibitem[\protect\citeauthoryear{Arasu, Chaudhuri, and Kaushik}{Arasu
  et~al\mbox{.}}{2008}]%
        {4497412}
\bibfield{author}{\bibinfo{person}{Arvind Arasu}, \bibinfo{person}{Surajit
  Chaudhuri}, {and} \bibinfo{person}{Raghav Kaushik}.}
  \bibinfo{year}{2008}\natexlab{}.
\newblock \showarticletitle{Transformation-based Framework for Record
  Matching}.
\newblock \bibinfo{journal}{\emph{ICDE}} (\bibinfo{year}{2008}),
  \bibinfo{pages}{40--49}.
\newblock


\bibitem[\protect\citeauthoryear{Arenas, Bertossi, and Chomicki}{Arenas
  et~al\mbox{.}}{1999}]%
        {Arenas:PODS:99:Consistent}
\bibfield{author}{\bibinfo{person}{Marcelo Arenas},
  \bibinfo{person}{Leopoldo~E. Bertossi}, {and} \bibinfo{person}{Jan
  Chomicki}.} \bibinfo{year}{1999}\natexlab{}.
\newblock \showarticletitle{Consistent Query Answers in Inconsistent
  Databases}. In \bibinfo{booktitle}{\emph{PODS}}.
\newblock


\bibitem[\protect\citeauthoryear{Bahmani, Bertossi, Kolahi, and
  Lakshmanan}{Bahmani et~al\mbox{.}}{2012}]%
        {DBLP:conf/kr/BahmaniBKL12}
\bibfield{author}{\bibinfo{person}{Zeinab Bahmani},
  \bibinfo{person}{Leopoldo~E. Bertossi}, \bibinfo{person}{Solmaz Kolahi},
  {and} \bibinfo{person}{Laks V.~S. Lakshmanan}.}
  \bibinfo{year}{2012}\natexlab{}.
\newblock \showarticletitle{Declarative Entity Resolution via Matching
  Dependencies and Answer Set Programs}. In \bibinfo{booktitle}{\emph{KR}}.
\newblock


\bibitem[\protect\citeauthoryear{Bahmani, Bertossi, and Vasiloglou}{Bahmani
  et~al\mbox{.}}{2015}]%
        {Bahmani2015ERBloxCM}
\bibfield{author}{\bibinfo{person}{Zeinab Bahmani},
  \bibinfo{person}{Leopoldo~E. Bertossi}, {and} \bibinfo{person}{Nikolaos
  Vasiloglou}.} \bibinfo{year}{2015}\natexlab{}.
\newblock \showarticletitle{{ERBlox}: Combining Matching Dependencies with
  Machine Learning for Entity Resolution}. In \bibinfo{booktitle}{\emph{SUM}}.
\newblock


\bibitem[\protect\citeauthoryear{Benjelloun, Garcia-Molina, Menestrina, Su,
  Whang, and Widom}{Benjelloun et~al\mbox{.}}{2008}]%
        {Benjelloun:2009:SGA:1541533.1541538}
\bibfield{author}{\bibinfo{person}{Omar Benjelloun}, \bibinfo{person}{Hector
  Garcia-Molina}, \bibinfo{person}{David Menestrina}, \bibinfo{person}{Qi Su},
  \bibinfo{person}{Steven~Euijong Whang}, {and} \bibinfo{person}{Jennifer
  Widom}.} \bibinfo{year}{2008}\natexlab{}.
\newblock \showarticletitle{Swoosh: a generic approach to entity resolution}.
\newblock \bibinfo{journal}{\emph{The VLDB Journal}}  \bibinfo{volume}{18}
  (\bibinfo{year}{2008}), \bibinfo{pages}{255--276}.
\newblock


\bibitem[\protect\citeauthoryear{Bertossi, Kolahi, and Lakshmanan}{Bertossi
  et~al\mbox{.}}{2011}]%
        {Bertossi2011DataCA}
\bibfield{author}{\bibinfo{person}{Leopoldo~E. Bertossi},
  \bibinfo{person}{Solmaz Kolahi}, {and} \bibinfo{person}{Laks V.~S.
  Lakshmanan}.} \bibinfo{year}{2011}\natexlab{}.
\newblock \showarticletitle{Data Cleaning and Query Answering with Matching
  Dependencies and Matching Functions}.
\newblock \bibinfo{journal}{\emph{Theory of Computing Systems}}
  \bibinfo{volume}{52} (\bibinfo{year}{2011}), \bibinfo{pages}{441--482}.
\newblock


\bibitem[\protect\citeauthoryear{Bohannon, Fan, Flaster, and Rastogi}{Bohannon
  et~al\mbox{.}}{2005}]%
        {10.1145/1066157.1066175}
\bibfield{author}{\bibinfo{person}{Philip Bohannon}, \bibinfo{person}{Wenfei
  Fan}, \bibinfo{person}{Michael Flaster}, {and} \bibinfo{person}{Rajeev
  Rastogi}.} \bibinfo{year}{2005}\natexlab{}.
\newblock \showarticletitle{A Cost-Based Model and Effective Heuristic for
  Repairing Constraints by Value Modification}. In
  \bibinfo{booktitle}{\emph{Proceedings of the 2005 ACM SIGMOD International
  Conference on Management of Data}}. \bibinfo{publisher}{Association for
  Computing Machinery}, \bibinfo{address}{New York, NY, USA},
  \bibinfo{pages}{143?154}.
\newblock
\showISBNx{1595930604}
\urldef\tempurl%
\url{https://doi.org/10.1145/1066157.1066175}
\showDOI{\tempurl}


\bibitem[\protect\citeauthoryear{{Bohannon}, {Fan}, {Geerts}, {Jia}, and
  {Kementsietsidis}}{{Bohannon} et~al\mbox{.}}{2007}]%
        {4221723}
\bibfield{author}{\bibinfo{person}{P. {Bohannon}}, \bibinfo{person}{W. {Fan}},
  \bibinfo{person}{F. {Geerts}}, \bibinfo{person}{X. {Jia}}, {and}
  \bibinfo{person}{A. {Kementsietsidis}}.} \bibinfo{year}{2007}\natexlab{}.
\newblock \showarticletitle{Conditional Functional Dependencies for Data
  Cleaning}. In \bibinfo{booktitle}{\emph{2007 IEEE 23rd International
  Conference on Data Engineering}}. \bibinfo{pages}{746--755}.
\newblock


\bibitem[\protect\citeauthoryear{Burdick, Fagin, Kolaitis, Popa, and
  Tan}{Burdick et~al\mbox{.}}{2016}]%
        {Burdick:2016:DFL:2966276.2894748}
\bibfield{author}{\bibinfo{person}{Douglas Burdick}, \bibinfo{person}{Ronald
  Fagin}, \bibinfo{person}{Phokion~G. Kolaitis}, \bibinfo{person}{Lucian Popa},
  {and} \bibinfo{person}{Wang-Chiew Tan}.} \bibinfo{year}{2016}\natexlab{}.
\newblock \showarticletitle{A Declarative Framework for Linking Entities}.
\newblock \bibinfo{journal}{\emph{ACM Trans. Database Syst.}}
  \bibinfo{volume}{41}, \bibinfo{number}{3}, Article \bibinfo{articleno}{17}
  (\bibinfo{year}{2016}), \bibinfo{numpages}{38}~pages.
\newblock
\showISSN{0362-5915}
\urldef\tempurl%
\url{https://doi.org/10.1145/2894748}
\showDOI{\tempurl}


\bibitem[\protect\citeauthoryear{Chu, Ilyas, Krishnan, and Wang}{Chu
  et~al\mbox{.}}{2016}]%
        {DBLP:conf/sigmod/ChuIKW16}
\bibfield{author}{\bibinfo{person}{Xu Chu}, \bibinfo{person}{Ihab~F. Ilyas},
  \bibinfo{person}{Sanjay Krishnan}, {and} \bibinfo{person}{Jiannan Wang}.}
  \bibinfo{year}{2016}\natexlab{}.
\newblock \showarticletitle{Data Cleaning: Overview and Emerging Challenges}.
  In \bibinfo{booktitle}{\emph{SIGMOD Conference}}.
\newblock


\bibitem[\protect\citeauthoryear{Cong, Fan, Geerts, Jia, and Ma}{Cong
  et~al\mbox{.}}{2007}]%
        {Cong:Gao:2007}
\bibfield{author}{\bibinfo{person}{Gao Cong}, \bibinfo{person}{Wenfei Fan},
  \bibinfo{person}{Floris Geerts}, \bibinfo{person}{Xibei Jia}, {and}
  \bibinfo{person}{Shuai Ma}.} \bibinfo{year}{2007}\natexlab{}.
\newblock \showarticletitle{Improving Data Quality: Consistency and Accuracy.}
\newblock \bibinfo{journal}{\emph{Proc. Int'l Conf. Very Large Data Bases
  (VLDB)}}, \bibinfo{pages}{315--326}.
\newblock


\bibitem[\protect\citeauthoryear{Das, Doan, G.~C., Gokhale, and Konda}{Das
  et~al\mbox{.}}{[n. d.]}]%
        {magellandata}
\bibfield{author}{\bibinfo{person}{Sanjib Das}, \bibinfo{person}{AnHai Doan},
  \bibinfo{person}{Paul~Suganthan G.~C.}, \bibinfo{person}{Chaitanya Gokhale},
  {and} \bibinfo{person}{Pradap Konda}.} \bibinfo{year}{[n. d.]}\natexlab{}.
\newblock \bibinfo{title}{The {M}agellan Data Repository}.
\newblock
  \bibinfo{howpublished}{\url{https://sites.google.com/site/anhaidgroup/projects/data}}.
\newblock


\bibitem[\protect\citeauthoryear{De~Raedt}{De~Raedt}{2010}]%
        {DeRaedt:2010:LRL:1952055}
\bibfield{author}{\bibinfo{person}{Luc De~Raedt}.}
  \bibinfo{year}{2010}\natexlab{}.
\newblock \bibinfo{booktitle}{\emph{Logical and Relational Learning}
  (\bibinfo{edition}{1st} ed.)}.
\newblock \bibinfo{publisher}{Springer Publishing Company, Incorporated}.
\newblock
\showISBNx{3642057489, 9783642057489}


\bibitem[\protect\citeauthoryear{Doan, Halevy, and Ives}{Doan
  et~al\mbox{.}}{2012}]%
        {Doan:2012:PDI:2401764}
\bibfield{author}{\bibinfo{person}{AnHai Doan}, \bibinfo{person}{Alon Halevy},
  {and} \bibinfo{person}{Zachary Ives}.} \bibinfo{year}{2012}\natexlab{}.
\newblock \bibinfo{booktitle}{\emph{Principles of Data Integration}
  (\bibinfo{edition}{1st} ed.)}.
\newblock \bibinfo{publisher}{Morgan Kaufmann Publishers Inc.},
  \bibinfo{address}{San Francisco, CA, USA}.
\newblock
\showISBNx{0124160441, 9780124160446}


\bibitem[\protect\citeauthoryear{Domingos}{Domingos}{2018}]%
        {10.1145/3183713.3199515}
\bibfield{author}{\bibinfo{person}{Pedro Domingos}.}
  \bibinfo{year}{2018}\natexlab{}.
\newblock \showarticletitle{Machine Learning for Data Management: Problems and
  Solutions}. In \bibinfo{booktitle}{\emph{Proceedings of the 2018
  International Conference on Management of Data}}
  \emph{(\bibinfo{series}{SIGMOD ?18})}. \bibinfo{publisher}{Association for
  Computing Machinery}, \bibinfo{address}{New York, NY, USA},
  \bibinfo{pages}{629}.
\newblock
\showISBNx{9781450347037}
\urldef\tempurl%
\url{https://doi.org/10.1145/3183713.3199515}
\showDOI{\tempurl}


\bibitem[\protect\citeauthoryear{Eiter, Gottlob, and Mannila}{Eiter
  et~al\mbox{.}}{1997}]%
        {Eiter:1997:DD:261124.261126}
\bibfield{author}{\bibinfo{person}{Thomas Eiter}, \bibinfo{person}{Georg
  Gottlob}, {and} \bibinfo{person}{Heikki Mannila}.}
  \bibinfo{year}{1997}\natexlab{}.
\newblock \showarticletitle{Disjunctive {Datalog}}.
\newblock \bibinfo{journal}{\emph{ACM Trans. Database Syst.}}
  \bibinfo{volume}{22} (\bibinfo{year}{1997}), \bibinfo{pages}{364--418}.
\newblock


\bibitem[\protect\citeauthoryear{Evans and Grefenstette}{Evans and
  Grefenstette}{2018}]%
        {Evans2018LearningER}
\bibfield{author}{\bibinfo{person}{Richard Evans} {and} \bibinfo{person}{Edward
  Grefenstette}.} \bibinfo{year}{2018}\natexlab{}.
\newblock \showarticletitle{Learning Explanatory Rules from Noisy Data}.
\newblock \bibinfo{journal}{\emph{J. Artif. Intell. Res.}}
  \bibinfo{volume}{61} (\bibinfo{year}{2018}), \bibinfo{pages}{1--64}.
\newblock


\bibitem[\protect\citeauthoryear{Fan}{Fan}{2008}]%
        {DBLP:conf/pods/Fan08}
\bibfield{author}{\bibinfo{person}{Wenfei Fan}.}
  \bibinfo{year}{2008}\natexlab{}.
\newblock \showarticletitle{Dependencies revisited for improving data quality}.
  In \bibinfo{booktitle}{\emph{Proceedings of the Twenty-Seventh {ACM}
  {SIGMOD-SIGACT-SIGART} Symposium on Principles of Database Systems, {PODS}
  2008, June 9-11, 2008, Vancouver, BC, Canada}},
  \bibfield{editor}{\bibinfo{person}{Maurizio Lenzerini} {and}
  \bibinfo{person}{Domenico Lembo}} (Eds.). \bibinfo{publisher}{{ACM}},
  \bibinfo{pages}{159--170}.
\newblock
\urldef\tempurl%
\url{https://doi.org/10.1145/1376916.1376940}
\showDOI{\tempurl}


\bibitem[\protect\citeauthoryear{Fan and Geerts}{Fan and Geerts}{2012}]%
        {DBLP:series/synthesis/2012Fan}
\bibfield{author}{\bibinfo{person}{Wenfei Fan} {and} \bibinfo{person}{Floris
  Geerts}.} \bibinfo{year}{2012}\natexlab{}.
\newblock \bibinfo{booktitle}{\emph{Foundations of Data Quality Management}}.
\newblock \bibinfo{publisher}{Morgan {\&} Claypool Publishers}.
\newblock
\urldef\tempurl%
\url{https://doi.org/10.2200/S00439ED1V01Y201207DTM030}
\showDOI{\tempurl}


\bibitem[\protect\citeauthoryear{Fan, Jia, Li, and Ma}{Fan
  et~al\mbox{.}}{2009}]%
        {Fan2009ReasoningAR}
\bibfield{author}{\bibinfo{person}{Wenfei Fan}, \bibinfo{person}{Xibei Jia},
  \bibinfo{person}{Jianzhong Li}, {and} \bibinfo{person}{Shuai Ma}.}
  \bibinfo{year}{2009}\natexlab{}.
\newblock \showarticletitle{Reasoning about Record Matching Rules}.
\newblock \bibinfo{journal}{\emph{PVLDB}}  \bibinfo{volume}{2}
  (\bibinfo{year}{2009}), \bibinfo{pages}{407--418}.
\newblock


\bibitem[\protect\citeauthoryear{Franconi1, Palma, Leone, Perri, and
  Scarcello}{Franconi1 et~al\mbox{.}}{2001}]%
        {10.1007/3-540-45653-8_39}
\bibfield{author}{\bibinfo{person}{Enrico Franconi1},
  \bibinfo{person}{Antonio~Laureti Palma}, \bibinfo{person}{Nicola Leone},
  \bibinfo{person}{Simona Perri}, {and} \bibinfo{person}{Francesco Scarcello}.}
  \bibinfo{year}{2001}\natexlab{}.
\newblock \showarticletitle{Census Data Repair: A Challenging Application of
  Disjunctive Logic Programming}. In \bibinfo{booktitle}{\emph{Logic for
  Programming, Artificial Intelligence, and Reasoning}},
  \bibfield{editor}{\bibinfo{person}{Robert Nieuwenhuis} {and}
  \bibinfo{person}{Andrei Voronkov}} (Eds.). \bibinfo{publisher}{Springer
  Berlin Heidelberg}, \bibinfo{address}{Berlin, Heidelberg},
  \bibinfo{pages}{561--578}.
\newblock


\bibitem[\protect\citeauthoryear{Galhardas, Florescu, Shasha, Simon, and
  Saita}{Galhardas et~al\mbox{.}}{2001}]%
        {Galhardas:2001:DDC:645927.672042}
\bibfield{author}{\bibinfo{person}{Helena Galhardas}, \bibinfo{person}{Daniela
  Florescu}, \bibinfo{person}{Dennis Shasha}, \bibinfo{person}{Eric Simon},
  {and} \bibinfo{person}{Cristian-Augustin Saita}.}
  \bibinfo{year}{2001}\natexlab{}.
\newblock \showarticletitle{Declarative Data Cleaning: Language, Model, and
  Algorithms}. In \bibinfo{booktitle}{\emph{VLDB}}.
\newblock


\bibitem[\protect\citeauthoryear{GarciaMolina, Ullman, and Widom}{GarciaMolina
  et~al\mbox{.}}{2008}]%
        {DBBook}
\bibfield{author}{\bibinfo{person}{Hector GarciaMolina}, \bibinfo{person}{Jeff
  Ullman}, {and} \bibinfo{person}{Jennifer Widom}.}
  \bibinfo{year}{2008}\natexlab{}.
\newblock \bibinfo{booktitle}{\emph{{Database Systems: The Complete Book}}}.
\newblock \bibinfo{publisher}{{Prentice Hall}}.
\newblock


\bibitem[\protect\citeauthoryear{Getoor and Machanavajjhala}{Getoor and
  Machanavajjhala}{2013}]%
        {Getoor:2013:ERB:2487575.2506179}
\bibfield{author}{\bibinfo{person}{Lise Getoor} {and} \bibinfo{person}{Ashwin
  Machanavajjhala}.} \bibinfo{year}{2013}\natexlab{}.
\newblock \showarticletitle{Entity resolution for big data}. In
  \bibinfo{booktitle}{\emph{KDD}}.
\newblock


\bibitem[\protect\citeauthoryear{Getoor and Taskar}{Getoor and Taskar}{2007}]%
        {Getoor:SRLBook}
\bibfield{author}{\bibinfo{person}{Lise Getoor} {and} \bibinfo{person}{Ben
  Taskar}.} \bibinfo{year}{2007}\natexlab{}.
\newblock \bibinfo{booktitle}{\emph{{Introduction to Statistical Relational
  Learning}}}.
\newblock \bibinfo{publisher}{{MIT Press}}.
\newblock


\bibitem[\protect\citeauthoryear{Golab, Karloff, Korn, Srivastava, and
  Yu}{Golab et~al\mbox{.}}{2008}]%
        {10.14778/1453856.1453900}
\bibfield{author}{\bibinfo{person}{Lukasz Golab}, \bibinfo{person}{Howard
  Karloff}, \bibinfo{person}{Flip Korn}, \bibinfo{person}{Divesh Srivastava},
  {and} \bibinfo{person}{Bei Yu}.} \bibinfo{year}{2008}\natexlab{}.
\newblock \showarticletitle{On Generating Near-Optimal Tableaux for Conditional
  Functional Dependencies}.
\newblock \bibinfo{journal}{\emph{Proc. VLDB Endow.}} \bibinfo{volume}{1},
  \bibinfo{number}{1} (\bibinfo{date}{Aug.} \bibinfo{year}{2008}),
  \bibinfo{pages}{376--390}.
\newblock
\showISSN{2150-8097}
\urldef\tempurl%
\url{https://doi.org/10.14778/1453856.1453900}
\showDOI{\tempurl}


\bibitem[\protect\citeauthoryear{Gotoh}{Gotoh}{1982}]%
        {Gotoh1982AnIA}
\bibfield{author}{\bibinfo{person}{Osamu Gotoh}.}
  \bibinfo{year}{1982}\natexlab{}.
\newblock \showarticletitle{An improved algorithm for matching biological
  sequences}.
\newblock \bibinfo{journal}{\emph{Journal of Molecular Biology}}
  \bibinfo{volume}{162 3} (\bibinfo{year}{1982}), \bibinfo{pages}{705--708}.
\newblock


\bibitem[\protect\citeauthoryear{Hern{\'a}ndez, Koutrika, Krishnamurthy, Popa,
  and Wisnesky}{Hern{\'a}ndez et~al\mbox{.}}{2013}]%
        {Hernandez:2013:HHS:2452376.2452440}
\bibfield{author}{\bibinfo{person}{Miguel~{\'A}ngel Hern{\'a}ndez},
  \bibinfo{person}{Georgia Koutrika}, \bibinfo{person}{Rajasekar
  Krishnamurthy}, \bibinfo{person}{Lucian Popa}, {and} \bibinfo{person}{Ryan
  Wisnesky}.} \bibinfo{year}{2013}\natexlab{}.
\newblock \showarticletitle{{HIL}: a high-level scripting language for entity
  integration}. In \bibinfo{booktitle}{\emph{EDBT}}.
\newblock


\bibitem[\protect\citeauthoryear{Hern{\'a}ndez and Stolfo}{Hern{\'a}ndez and
  Stolfo}{1995}]%
        {Hernandez:1995:MPL:223784.223807}
\bibfield{author}{\bibinfo{person}{Mauricio~A. Hern{\'a}ndez} {and}
  \bibinfo{person}{Salvatore~J. Stolfo}.} \bibinfo{year}{1995}\natexlab{}.
\newblock \showarticletitle{The Merge/Purge Problem for Large Databases}. In
  \bibinfo{booktitle}{\emph{SIGMOD Conference}}.
\newblock


\bibitem[\protect\citeauthoryear{Ilyas}{Ilyas}{2016}]%
        {DBLP:journals/debu/Ilyas16}
\bibfield{author}{\bibinfo{person}{Ihab~F. Ilyas}.}
  \bibinfo{year}{2016}\natexlab{}.
\newblock \showarticletitle{Effective Data Cleaning with Continuous
  Evaluation}.
\newblock \bibinfo{journal}{\emph{{IEEE} Data Eng. Bull.}}
  \bibinfo{volume}{39}, \bibinfo{number}{2} (\bibinfo{year}{2016}),
  \bibinfo{pages}{38--46}.
\newblock
\urldef\tempurl%
\url{http://sites.computer.org/debull/A16june/p38.pdf}
\showURL{%
\tempurl}


\bibitem[\protect\citeauthoryear{Kalashnikov, Lakshmanan, and
  Srivastava}{Kalashnikov et~al\mbox{.}}{2018}]%
        {Kalashnikov:2018:FFQ:3183713.3183727}
\bibfield{author}{\bibinfo{person}{Dmitri~V. Kalashnikov},
  \bibinfo{person}{Laks~V.S. Lakshmanan}, {and} \bibinfo{person}{Divesh
  Srivastava}.} \bibinfo{year}{2018}\natexlab{}.
\newblock \showarticletitle{FastQRE: Fast Query Reverse Engineering}. In
  \bibinfo{booktitle}{\emph{Proceedings of the 2018 International Conference on
  Management of Data}} \emph{(\bibinfo{series}{SIGMOD '18})}.
  \bibinfo{publisher}{ACM}, \bibinfo{address}{New York, NY, USA},
  \bibinfo{pages}{337--350}.
\newblock
\showISBNx{978-1-4503-4703-7}
\urldef\tempurl%
\url{https://doi.org/10.1145/3183713.3183727}
\showDOI{\tempurl}


\bibitem[\protect\citeauthoryear{Kimmig, Poole, and Pujara}{Kimmig
  et~al\mbox{.}}{2020}]%
        {StarAIW}
\bibfield{author}{\bibinfo{person}{Angelika Kimmig}, \bibinfo{person}{David
  Poole}, {and} \bibinfo{person}{Jay Pujara}.} \bibinfo{year}{2020}\natexlab{}.
\newblock \showarticletitle{Statistical Relational AI (StarAI) WorkShop}. In
  \bibinfo{booktitle}{\emph{AAAI}}.
\newblock


\bibitem[\protect\citeauthoryear{Kolahi and Lakshmanan}{Kolahi and
  Lakshmanan}{2009}]%
        {10.1145/1514894.1514901}
\bibfield{author}{\bibinfo{person}{Solmaz Kolahi} {and} \bibinfo{person}{Laks
  V.~S. Lakshmanan}.} \bibinfo{year}{2009}\natexlab{}.
\newblock \showarticletitle{On Approximating Optimum Repairs for Functional
  Dependency Violations}. In \bibinfo{booktitle}{\emph{Proceedings of the 12th
  International Conference on Database Theory}}.
  \bibinfo{publisher}{Association for Computing Machinery},
  \bibinfo{address}{New York, NY, USA}, \bibinfo{pages}{53?62}.
\newblock
\showISBNx{9781605584232}
\urldef\tempurl%
\url{https://doi.org/10.1145/1514894.1514901}
\showDOI{\tempurl}


\bibitem[\protect\citeauthoryear{Koumarelas, Papenbrock, and
  Naumann}{Koumarelas et~al\mbox{.}}{2020}]%
        {MDedup:Koumarelas}
\bibfield{author}{\bibinfo{person}{Ioannis Koumarelas},
  \bibinfo{person}{Thorsten Papenbrock}, {and} \bibinfo{person}{Felix
  Naumann}.} \bibinfo{year}{2020}\natexlab{}.
\newblock \showarticletitle{MDedup: Duplicate Detection with Matching
  Dependencies}.
\newblock \bibinfo{journal}{\emph{Proceedings of the VLDB Endowment}}
  \bibinfo{volume}{13}, \bibinfo{number}{5} (\bibinfo{year}{2020}),
  \bibinfo{pages}{712--725}.
\newblock


\bibitem[\protect\citeauthoryear{Krishnan, Wang, Wu, Franklin, and
  Goldberg}{Krishnan et~al\mbox{.}}{2016}]%
        {Krishnan:2016:AID:2882903.2899409}
\bibfield{author}{\bibinfo{person}{Sanjay Krishnan}, \bibinfo{person}{Jiannan
  Wang}, \bibinfo{person}{Eugene Wu}, \bibinfo{person}{Michael~J. Franklin},
  {and} \bibinfo{person}{Kenneth~Y. Goldberg}.}
  \bibinfo{year}{2016}\natexlab{}.
\newblock \showarticletitle{{ActiveClean}: Interactive Data Cleaning For
  Statistical Modeling}.
\newblock \bibinfo{journal}{\emph{PVLDB}}  \bibinfo{volume}{9}
  (\bibinfo{year}{2016}), \bibinfo{pages}{948--959}.
\newblock


\bibitem[\protect\citeauthoryear{Lao, Minkov, and Cohen}{Lao
  et~al\mbox{.}}{2015}]%
        {lao-etal-2015-learning}
\bibfield{author}{\bibinfo{person}{Ni Lao}, \bibinfo{person}{Einat Minkov},
  {and} \bibinfo{person}{William Cohen}.} \bibinfo{year}{2015}\natexlab{}.
\newblock \showarticletitle{Learning Relational Features with Backward Random
  Walks}. In \bibinfo{booktitle}{\emph{Proceedings of the 53rd Annual Meeting
  of the Association for Computational Linguistics and the 7th International
  Joint Conference on Natural Language Processing (Volume 1: Long Papers)}}.
  \bibinfo{publisher}{Association for Computational Linguistics},
  \bibinfo{address}{Beijing, China}, \bibinfo{pages}{666--675}.
\newblock
\urldef\tempurl%
\url{https://doi.org/10.3115/v1/P15-1065}
\showDOI{\tempurl}


\bibitem[\protect\citeauthoryear{Li, Chan, and Maier}{Li et~al\mbox{.}}{2015}]%
        {Maier:VLDB:2015}
\bibfield{author}{\bibinfo{person}{Hao Li}, \bibinfo{person}{Chee~Yong Chan},
  {and} \bibinfo{person}{David Maier}.} \bibinfo{year}{2015}\natexlab{}.
\newblock \showarticletitle{Query From Examples: An Iterative, Data-Driven
  Approach to Query Construction}.
\newblock \bibinfo{journal}{\emph{PVLDB}}  \bibinfo{volume}{8}
  (\bibinfo{year}{2015}), \bibinfo{pages}{2158--2169}.
\newblock


\bibitem[\protect\citeauthoryear{Mihalkova and Mooney}{Mihalkova and
  Mooney}{2007}]%
        {Mihalkova:ICML:07}
\bibfield{author}{\bibinfo{person}{Lilyana Mihalkova} {and}
  \bibinfo{person}{Raymond~J. Mooney}.} \bibinfo{year}{2007}\natexlab{}.
\newblock \showarticletitle{Bottom-up learning of {Markov} logic network
  structure}. In \bibinfo{booktitle}{\emph{ICML}}.
\newblock


\bibitem[\protect\citeauthoryear{Muggleton}{Muggleton}{1995}]%
        {progol}
\bibfield{author}{\bibinfo{person}{Stephen Muggleton}.}
  \bibinfo{year}{1995}\natexlab{}.
\newblock \showarticletitle{Inverse entailment and {Progol}}.
\newblock \bibinfo{journal}{\emph{New Generation Computing}}
  \bibinfo{volume}{13} (\bibinfo{year}{1995}), \bibinfo{pages}{245--286}.
\newblock


\bibitem[\protect\citeauthoryear{Muggleton and Feng}{Muggleton and
  Feng}{1990}]%
        {golem}
\bibfield{author}{\bibinfo{person}{Stephen Muggleton} {and}
  \bibinfo{person}{Cao Feng}.} \bibinfo{year}{1990}\natexlab{}.
\newblock \showarticletitle{Efficient Induction of Logic Programs}. In
  \bibinfo{booktitle}{\emph{ALT}}.
\newblock


\bibitem[\protect\citeauthoryear{Muggleton, Santos, and
  Tamaddoni-Nezhad}{Muggleton et~al\mbox{.}}{2009}]%
        {progolem}
\bibfield{author}{\bibinfo{person}{Stephen Muggleton}, \bibinfo{person}{Jose
  Santos}, {and} \bibinfo{person}{Alireza Tamaddoni-Nezhad}.}
  \bibinfo{year}{2009}\natexlab{}.
\newblock \showarticletitle{{ProGolem}: A System Based on Relative Minimal
  Generalisation}. In \bibinfo{booktitle}{\emph{ILP}}.
\newblock


\bibitem[\protect\citeauthoryear{Picado, Termehchy, and Fern}{Picado
  et~al\mbox{.}}{2017}]%
        {castor:SIGMOD17}
\bibfield{author}{\bibinfo{person}{Jose Picado}, \bibinfo{person}{Arash
  Termehchy}, {and} \bibinfo{person}{Alan Fern}.}
  \bibinfo{year}{2017}\natexlab{}.
\newblock \showarticletitle{Schema Independent Relational Learning}. In
  \bibinfo{booktitle}{\emph{SIGMOD Conference}}.
\newblock


\bibitem[\protect\citeauthoryear{Quinlan}{Quinlan}{1990}]%
        {Quinlan:FOIL}
\bibfield{author}{\bibinfo{person}{J.~Ross Quinlan}.}
  \bibinfo{year}{1990}\natexlab{}.
\newblock \showarticletitle{Learning Logical Definitions from Relations}.
\newblock \bibinfo{journal}{\emph{Machine Learning}}  \bibinfo{volume}{5}
  (\bibinfo{year}{1990}), \bibinfo{pages}{239--266}.
\newblock


\bibitem[\protect\citeauthoryear{Raedt, Poole, Kersting, and Natarajan}{Raedt
  et~al\mbox{.}}{2017}]%
        {SRLNIPS}
\bibfield{author}{\bibinfo{person}{Luc~De Raedt}, \bibinfo{person}{David
  Poole}, \bibinfo{person}{Kristian Kersting}, {and} \bibinfo{person}{Sriraam
  Natarajan}.} \bibinfo{year}{2017}\natexlab{}.
\newblock \showarticletitle{Statistical Relational Artificial Intelligence:
  Logic, Probability and Computation}. In \bibinfo{booktitle}{\emph{NeurIPS}}.
\newblock


\bibitem[\protect\citeauthoryear{Rekatsinas, Chu, Ilyas, and R{\'e}}{Rekatsinas
  et~al\mbox{.}}{2017}]%
        {Rekatsinas2017HoloCleanHD}
\bibfield{author}{\bibinfo{person}{Theodoros~I. Rekatsinas},
  \bibinfo{person}{Xu Chu}, \bibinfo{person}{Ihab~F. Ilyas}, {and}
  \bibinfo{person}{Christopher R{\'e}}.} \bibinfo{year}{2017}\natexlab{}.
\newblock \showarticletitle{{HoloClean}: Holistic Data Repairs with
  Probabilistic Inference}.
\newblock \bibinfo{journal}{\emph{PVLDB}}  \bibinfo{volume}{10}
  (\bibinfo{year}{2017}), \bibinfo{pages}{1190--1201}.
\newblock


\bibitem[\protect\citeauthoryear{Richardson and Domingos}{Richardson and
  Domingos}{2006}]%
        {Richardson:2006:MLN}
\bibfield{author}{\bibinfo{person}{Matthew Richardson} {and}
  \bibinfo{person}{Pedro~M. Domingos}.} \bibinfo{year}{2006}\natexlab{}.
\newblock \showarticletitle{Markov logic networks}.
\newblock \bibinfo{journal}{\emph{Machine Learning}}  \bibinfo{volume}{62}
  (\bibinfo{year}{2006}), \bibinfo{pages}{107--136}.
\newblock


\bibitem[\protect\citeauthoryear{Weis, Naumann, Jehle, Lufter, and
  Schuster}{Weis et~al\mbox{.}}{2008}]%
        {Weis:2008:IDD:1454159.1454165}
\bibfield{author}{\bibinfo{person}{Melanie Weis}, \bibinfo{person}{Felix
  Naumann}, \bibinfo{person}{Ulrich Jehle}, \bibinfo{person}{Jens Lufter},
  {and} \bibinfo{person}{Holger Schuster}.} \bibinfo{year}{2008}\natexlab{}.
\newblock \showarticletitle{Industry-scale duplicate detection}.
\newblock \bibinfo{journal}{\emph{PVLDB}}  \bibinfo{volume}{1}
  (\bibinfo{year}{2008}), \bibinfo{pages}{1253--1264}.
\newblock


\bibitem[\protect\citeauthoryear{Wijsen}{Wijsen}{2003}]%
        {10.5555/645505.656435}
\bibfield{author}{\bibinfo{person}{Jef Wijsen}.}
  \bibinfo{year}{2003}\natexlab{}.
\newblock \showarticletitle{Condensed Representation of Database Repairs for
  Consistent Query Answering}. In \bibinfo{booktitle}{\emph{Proceedings of the
  9th International Conference on Database Theory}}.
  \bibinfo{publisher}{Springer-Verlag}, \bibinfo{address}{Berlin, Heidelberg},
  \bibinfo{pages}{378?393}.
\newblock
\showISBNx{3540003231}


\bibitem[\protect\citeauthoryear{Yakout, Elmagarmid, Neville, Ouzzani, and
  Ilyas}{Yakout et~al\mbox{.}}{2011}]%
        {10.14778/1952376.1952378}
\bibfield{author}{\bibinfo{person}{Mohamed Yakout}, \bibinfo{person}{Ahmed~K.
  Elmagarmid}, \bibinfo{person}{Jennifer Neville}, \bibinfo{person}{Mourad
  Ouzzani}, {and} \bibinfo{person}{Ihab~F. Ilyas}.}
  \bibinfo{year}{2011}\natexlab{}.
\newblock \showarticletitle{Guided Data Repair}.
\newblock \bibinfo{journal}{\emph{Proc. VLDB Endow.}} \bibinfo{volume}{4},
  \bibinfo{number}{5} (\bibinfo{date}{Feb.} \bibinfo{year}{2011}),
  \bibinfo{pages}{279?289}.
\newblock
\showISSN{2150-8097}
\urldef\tempurl%
\url{https://doi.org/10.14778/1952376.1952378}
\showDOI{\tempurl}


\bibitem[\protect\citeauthoryear{Zeng, Patel, and Page}{Zeng
  et~al\mbox{.}}{2014}]%
        {QuickFOIL}
\bibfield{author}{\bibinfo{person}{Qiang Zeng}, \bibinfo{person}{Jignesh~M.
  Patel}, {and} \bibinfo{person}{David Page}.} \bibinfo{year}{2014}\natexlab{}.
\newblock \showarticletitle{{QuickFOIL}: Scalable Inductive Logic Programming}.
\newblock \bibinfo{journal}{\emph{PVLDB}}  \bibinfo{volume}{8}
  (\bibinfo{year}{2014}), \bibinfo{pages}{197--208}.
\newblock


\end{thebibliography}
		\appendix
\section{Omitted proofs}
\label{sec:omitted}
\noindent
{\bf Proof of Proposition 4.3:}\\
We show that the algorithm never adds a repair literal that reverts 
the impact of a previously added repair literal. This may happen only if there is a chain of CFDs, such as,
$\phi:$ $AC \rightarrow B, t_p$ and $\phi':$ $BD \rightarrow A, t'_p$ over relation $R$. 
Consider a violation of $\phi$ with two literals $l_1$ and $l_2$ of $R$ in the input bottom-clause in which 
the variables associated with attributes $A$ and $C$ are equal and match $t_p$ but the 
variables associated with $B$ are not equal or do not satisfy $t_p$.
Let repair literal $r_{\phi}$ unify the value of variables in attribute $B$ for a violation of $\phi$ 
and set them to a constant if their corresponding element in $t_p$ is a constant.
We show that the repair introduced by $r_{\phi}$ does not cause a violation of $\phi'$ for literals
$l_2$ and $l_1$. Let the values for $A$ in $t_p$ and $t'_p$ be equal or at least one of them is '-'. 
If the variables assigned to attribute $D$ in $l_2$ and $l_1$ are equal 
and match $t'_p$, $l_2$ and $l_1$ satisfy $\phi'$ as the violation of $\phi$ indicate that 
the variables assigned to attribute $A$ in $l_2$ and $l_1$ are equal.
If the variables assigned to attribute $D$ are not equal, $l_2$ and $l_1$ satisfy $\phi'$.
Now, assume that values for $A$ in $t_p$ and $t'_p$ are unequal constants. Then, $\phi$ and $\phi'$ are not
consistent. The proposition is proved for longer chains similarly.

Let $C^{s}_{e}$ be a repaired clause created by the application of a set of repair literals $r$ 
in $C_e$. Application of repair literals in $r$ correspond to applying some MDs in $\Sigma$ 
or CFDs in $\Phi$ on $I_e$ that creates a repair of $I_e$ such that $C^{s}_{e}$ covers $e$. 
Thus, according to Definition 3.4, $C_e$ covers $e$. \\

\noindent
{\bf Proof for Theorem 4.6:}\\
	Let $\theta$ be the substitution mapping from $C$ to $D$. 
	Let $V^D$ be a mapped repair literal in $D$ with corresponding literal in $C$ $V^C$ such that 
	$V^C \theta= V^D$. If variables/constants $x$ and $y$ are equal or similar in $C$, 
	$\theta(x)$ and $\theta(y)$ are also equal in $D$.
	Thus, $V^C$ and $V^D$ are applied to literals $L^{C}_1$ and $L^{C}_2$ and $L^{D}_1$ and $L^{D}_2$
	such that $L^{C}_i \theta=L^{D}_i$, $1 \leq i \leq 2$, and modify the variables in those 
	literals that are mapped via $\theta$.
	In our clauses, there is not any repair literal $V$ whose condition is false. Otherwise, $V$
	either has not been placed in the clause or has been removed from it after applying another repair 
	literal that makes the condition of $V$ false.
	Let $M^{C}_i$ and $M^{D}_i$ be the modifications of $L^{C}_i$ and $L^{D}_i$, respectively. 
	By applying $V^C$ and $V^D$ on their corresponding literal, we will have
	$M^{C}_i \theta=M^{D}_i$. Moreover, if $V^C$ replaces a variable $x$ with 
	$v_x$, $V^C$ will also replace $y = x \theta$ with $v_y$ such that $v_y = v_x \theta$. 
	Each repair literal either unifies two variables in two non-repair literals using fresh (MD applications)
	or existing variables (CFD repair by modifying the right-hand side) or replaces existing variable(s) with  fresh ones (CFD repair by changing the left-hand side).
	Since equal variables in $C$ are mapped to equal ones in $D$, the applications of $V^C$ and $V^D$ will result
	in removing repair literals $U^C$ and in $U^D$ such that $U^C \theta U^D$.

	The unmapped repair literals in $D$ do not modify $\theta$ for the variables as they are not 
	connected to the mapped non-repair literals of $D$.
	Let $C_r$ and $D_r$ be result of applying $V^C$ and $V^D$, respectively.
	There is a subsumption mapping between $C_r$ and $D_r$ using a substitution $\theta' \subseteq \theta$.
	$\theta'$ may not have some of the variables that exist in $C$ and $D$ but not $C_r$ and $D_r$.
	Thus, there is a $theta$-subsumption between the clauses after each repair. We repeat the same argument 
	for applications of each repair literals other than $V^C$ and $V^D$ in $C$ and $D$ 
	and also every repair literal in every resulting repairs of $C$ and $D$, such as $C_r$ and $D_r$.
	As there are $\theta$-subsumption between every repair of $C$ and some repair of $D$, 
	according to the definition of logical entailment for clauses with repair literals, $C \models D$.\\

\noindent
{\bf Proof for Proposition 4.8:}\\
Since we drop each literal in a clause with its repair literals, it 
corresponds to dropping the repairs of this literal in each repaired version of the clause during its 
generalization over its corresponding clean database. Thus, according to 
Theorem 4.6, the proof is similar to the one of Theorem 3 in \cite{progolem}.\\

\noindent
{\bf Proof for Theorem 4.9:}\\
	Let $\mC$ and $\mD$ be the set of repaired clauses for 
	$C$ and $D$, respectively. According to definition of logical entailment for clauses with
	repair literals, for each $C_r \in \mC$, there is a $D_r in \mD$ such that $C_r \models D_r$.
	Since $C_r$ and $D_r$ do not have any repair literal, there is a substitution mapping $\theta_r$ such that
	$C_r \theta_1 \subseteq D_r$. 
	Let $C_o$ and $D_o$ be clauses where the application of a single 
	repair literal result in producing $C_r$ and $D_r$. 
	We show that for each $C_o$ there is a $D_o$ such that there is a substitution mapping $\theta_o$ between $C_o$ and $D_o$.
	$C_r$ and $D_r$ are defined over the same database with the same set of constraints
	and we add similarity literal(s) to clauses during the process of (ground) bottom-clause construction if 
	they satisfy the application of an MD. 
	For every pair of literals $L_1$ and $L_2$ in $C_r$, if they satisfy the left-hand side of an MD, 
	there are repair literals to apply the MD in $C$ for these literals. The same is true in $D_r$ and $D$. 
	If the repair applied on $C_o$ is due to an MD, as $\theta_r$ preserves 
	similarity and equality between variables, there is a $D_o$ for $D_r$ such that 
	the repair applied on $D_o$ must also be according to an MD. 
	Also, we do not have a CFD and MD that share their left-hand side as MDs are defined over distinct relations.
	Let $V_j^C(x_j,v_{x_j})$, $V_j^D(y_j,v_{y_j})$, $1 \leq j \leq 2$
	be the repair literals that modify $C_o$ and $D_o$ to $C_r$ and $D_r$.
	The mapping $\theta_r$ maps $v_{x_j}$ to $v_{y_j}$. If variables $x_j$ and $y_j$ do not appear in 
	$C_r$ and $D_r$, we add new mappings from $x_j$ to $y_j$ to $theta_r$ to get subsumption mapping $theta_o$
	between $C_o$ and $D_o$.
	Otherwise, $x_1$ and $x_2$ and $y_1$ and $y_2$ appear in similarity literals in $C$ and $D$, respectively. 
	Thus, they are mapped using $theta_r$. Thus, there is a subsumption mapping 
	between $C_o$ and $D_o$ in both cases.\\

\noindent
{\bf Proof for Proposition 4.10:}\\ 
According to Theorem 4.6, if $s_C$ $\theta$-subsumes $G_{e^{-}}$, $C$ covers $e^{-}$ relative to $I$ based on Definition 3.6.\\

\noindent
{\bf Proof for Theorem 4.11:}\\
	Without loss of generality, assume that all learned definitions contain one clause.
	Let ${\bf J}  = \mathit{RepairedInst}(I, \Sigma, \Phi)  = \{ J_1, \ldots, J_n\}$.
	We show that $\mathit{BC}(I, e, \Sigma, \Phi) = C$ is a compact representation of 
	$\mathit{BC}_{r}({\bf J}, e) = \{ C_1, \ldots, C_n \}$.
	
	Let $\mathit{RepairedCls}(C)$ $=$ $\{ C'_1, \ldots, C'_m \}$. 
	We remove the literals that are not head-connected in each clause in $\{ C'_1, \ldots, C'_m \}$.
	Let $\{ J^{C'_1}, \ldots, J^{C'_m} \}$ be the canonical database instances of $\{ C'_1, \ldots, C'_m \}$ \cite{AliceBook}. 
	This set is the same set as the one generated by applying $\mathit{RepairedInst}(I^{C}, \Sigma, \Phi)$, 
	where $I^{C}$ is the canonical database instance of $C$.
	
	Let $\{ J^{C_1}, \ldots, J^{C_n} \}$ be the canonical database instances of $\{ C_1, \ldots, C_n \}$.
	By definition, $I^{C}$ contains all tuples that are related to $e$, either by exact or similarity matching (according to MDs in $\Sigma$). Because $\mathit{RepairedInst}(I^{C}, \Sigma) = \{ J^{C'_1}, \ldots, J^{C'_m} \}$,
	all tuples that may appear in an instance in $\{ J^{C_1}, \ldots, J^{C_n} \}$ must also appear in an instance in $\{ J^{C'_1},$ $\ldots,$ $J^{C'_m} \}$.
	
	A tuple $t$ may appear in an instance in $J^{C'_j} \in \{ J^{C'_1},$ $\ldots,$ $J^{C'_m} \}$, but not appear in the corresponding instance $J^{C_i} \in \{ J^{C_1}, \ldots, J^{C_n} \}$.
	In this case, $t$ became disconnected from training example $e$ when generating the repair $J_i$, which is a superset of $J^{C_i}$. Then, when building bottom-clause $C_i$ from $J_i$, a literal was not created for $t$.
	However, the same tuple would also become disconnected from training example $e$ in $J^{C'_j}$.
	Because we remove literals that are not head-connected in each clause in $\{ C'_1, \ldots, C'_m \}$, we would remove $t$ from $C'_j$.
	
	Let $R(\bar{x}, y, \bar{z})$ and $R(\bar{x}, y', \bar{z'})$ be a violation of CFD $(X \rightarrow A, t_{p})$
	in the bottom-clause generated by our algorithm. Each constant or variables in these literal 
	remains unchanged in at least one application of the repair literals. 
	Thus, if these literals are connected to the positive example $e$ in at least one of the 
	repairs of $I$ they will appear in the generated bottom-clause of our algorithm. Also, if they appear in the 
	our produced bottom-clause, they must appear at least in one of the repairs of $I$.
	
	The sets of canonical database instances $\{ J^{C'_1}, \ldots, J^{C'_m} \}$ and $\{ J^{C_1}, \ldots, J^{C_n} \}$ are both generated using the function {\it RepairedInst} with the same 
	dependencies $\Sigma$ and $\Phi$, and contain only tuples related to $e$.
	Therefore, $\mathit{}(C) = \{ C'_1, \ldots, C'_m \}$ is equal to $\{ C_1, \ldots, C_n \}$.\\

\noindent
{\bf Proof for Theorem 4.12:}\\
	We prove the theorem for the MDs. The proof for CFDs is done similarly.
	Let ${\bf J}  = \mathit{RepairedInst}(I, \Sigma)$.
	We show that the clause
	$\mathit{Generalize}(C, I, e', \Sigma)$ $=$ $C^*$ is a compact 
	representation of $\mathit{Generalize}_{r}({\bf C}, {\bf J}, e') = \{ C^*_1, \ldots, C^*_n \}$, 
	i.e.\\
	$\mathit{RepairedCls}(C^*) = \{ C^*_1, \ldots, C^*_n \}$.
	
	Assume that the schema is $\mR$ $=$ $\{ R_1(A, B),$ $R_2(B, C) \}$ 
	and we have MD $\phi: R_1[B] \approx R_2[B] \rightarrow R_1[B] \rightleftharpoons R_2[B]$.
	This proof generalizes to more complex schemas.
	Assume that database instance $I$ contains tuples $R_1(a,b)$, $R_2(b', c)$, and $R_2(b'', c)$, and that $b \approx b'$ and $b \approx b''$.
	Then, bottom-clause $C$ has the form
	\begin{align*}
		T(u) \leftarrow& L'_1, \ldots, L'_{l-1}, \\
		& R_1(a,b), R_2(b', c), V(b,x_b), V(b',x_{b'}), x_b=x_{b'}\\
		& R_2(b'', d), V(b,y_b), V(b'',y_{b''}), y_b=y_{b''}\\
		& L'_{l}, \ldots, L'_n,
	\end{align*}
	where $L'_k$, $1 \le k \le n$, is a literal.
	
	Now consider two stable instances generated by $\mathit{RepairedInst}(I, \Sigma)$: $J_1$, 
	which contains tuples $R_1(a, x_{b}),$ $R_2(x_{b}, c),$ $R_2(b'', c)$; and $J_2$, 
	which contains tuples 
	$R_1(a, y_{b}),$ $R_2(b', c),$ $R_2(y_{b}, c)$.
	The bottom-clause $C_1$ over instance $J_1$ has the form
	\begin{align*}
		T(u) \leftarrow& L_1, \ldots, L_{l-1}, \\
		& R_1(a, x_{b}), R_2(x_{b}, c), R_2(b'', c),\\
		& L_{l}, \ldots, L_n,
	\end{align*}
	and the bottom-clause $C_2$ over instance $J_2$ has the form
	\begin{align*}
		T(u) \leftarrow& L_1, \ldots, L_{l-1}, \\
		& R_1(a, y_{b}), R_2(b', c), R_2(y_{b}, c),\\
		& L_{l}, \ldots, L_n,
	\end{align*}
	where $L_k$, $1 \le k \le n$, is a literal. 
	
	We want to generalize $C_1$ to cover another training example $e'$. 
	Let $G_{e'}$ be the ground bottom-clause for $e'$ and $G'_{e'}$ be a repaired clause of $G_{e'}$.
	The literals in $C_1$ that are blocking will depend on the content of the ground bottom-clause $G'_{e'}$. 
	Assume that the sets of literals $\{ L'_1, \ldots, L'_n\}$ in clause $C$ and the set of literals $\{ L_1, \ldots, L_n\}$ in clauses $C_1$ and $C_2$ are equal.
	We consider the following cases for the literals that are not equal.
	The same cases apply when we want to generalize any other clause generated from a repaired instance, e.g., $C_2$.
	
	Case 1: $G'_{e'}$ contains the literals  $R_1(a, x_{b})$ and $R_2(x_{b}, c)$.
	In this case, $R_1(a, x_{b})$ and $R_2(x_{b}, c)$ are {\it not} blocking literals, i.e., they are not removed from $C_1$.
	$G_{e'}$ also contains literals $R_1(a, b), R_2(b', c), V(b,x_b),$ $V(b',x_{b'}), x_b=x_{b'}$. 
	Therefore, the same literals are {\it not} blocking literals in $C$ either.
	
	Case 2: $G'_{e'}$ contains literals with same relation names but not the same pattern.
	Assume that $G'_{e'}$ contains the literals $R_1(a, b)$ and $R_2(d, c)$, i.e., they do not join.
	In this case, literal $R_2(x_{b}, c)$ in $C_1$ is a blocking literal because it joins with a literal that appears previously in the clause, $R_1(a, x_{b})$. Hence, it is removed.
	$G_{e'}$ also contains literals $R_1(a, b)$ and $R_2(d, c)$. 
	Because in clause $G'_{e'}$, created from the repaired instance, 
	these literals do not join, in $G_{e'}$ they do not join either. In this case, the blocking literals in $C$ are $V(b,x_b), V(b',x_{b'}),$ $x_b=x_{b'}, R_2(b', c)$.
	
	Case 3: $G'_{e'}$ contains $R_1(a, x_{b})$, but not $R_2(x_{b}, c)$.
	In this case, literal $R_2(x_{b}, c)$ is a blocking literal in $C_1$. 
	Therefore, it is removed.
	$G_{e'}$ also contains literals $R_1(a, b)$ and $V(b,x_b), V(b',x_{b'}),$ $x_b=x_{b'}$, 
	but not $R_2(b', c)$. Therefore, literal $R_2(b', c)$ in $C$ is also blocking and it is removed.
	
	Case 4: $G'_{e'}$ contains $R_2(x_{b}, c)$, but not $R_1(a, x_{b'})$.
	This case is similar to the previous case.
	
	Case 5: $G'_{e'}$ contains neither $R_1(a, x_{b})$ nor $R_2(x_{b}, c)$.
	In this case, both $R_1(a, x_{b})$ and $R_2(x_{b}, c)$ are blocking; hence they are removed.
	$G_{e'}$ does not contain literals $R_1(a, b)$, $R_2(b', c)$, $V(b,x_b), V(b',x_{b'}),$ $x_b=x_{b'}$. 
	Hence, these literals are also blocking literals in $C$ and are removed.
	
	The generalization operations $\mathit{Generalize}(C, I, e', \Sigma)$ and $\mathit{Generalize}_{r}({\bf C}, {\bf J}, e')$ consist of removing blocking literals from $C$ and ${\bf C}$ respectively. 
	We have shown that the same literals are blocking over both the clauses. Therefore, 
	$\mathit{RepairedCls}(C^*) = \{ C^*_1, \ldots, C^*_n \}$.\\

	\end{document}